\definecolor{hot}{RGB}{65,105,225} 
\newtheorem{thm}{Theorem}[section]
\theoremstyle{definition}
\newtheorem{example}[thm]{Example}
\newtheorem{lem}[thm]{Lemma}
\newtheorem{prop}[thm]{Proposition}
\newtheorem{cor}[thm]{Corollary}
\theoremstyle{definition}
\newtheorem{defn}[thm]{Definition}
\newtheorem{problem}[thm]{Problem}
\newtheorem{rem}[thm]{Remark}
\numberwithin{equation}{section}
\newcommand{\MM}{\mathcal{M}}
\newcommand{\ff}{p} 
\DeclareMathOperator{\Jac}{Jac}
\DeclareMathOperator{\generic}{gen}
\newcommand{\PP}{\mathbb{P}}
\newcommand{\VV}{V}
\newcommand{\NN}{\mathbb{N}_{\geq 0}}
\newcommand{\CC}{\mathbb{C}}
\newcommand{\AS}{\mathcal{A}}
\newcommand{\ZV}{Z}
\newcommand{\aaa}{t}
\newcommand{\newt}{{\mathrm{Newt}}}
\newcommand{\conv}{{\mathrm{Conv}}}
\newcommand{\mvol}{{\mathrm{MVol}}}
\newcommand{\vol}{{\mathrm{Vol}}}
\newcommand{\init}{{\mathrm{Init}}}
\newcommand{\val}{{\mathrm{val}}}
\title{Estimating Gaussian mixtures using sparse polynomial moment systems}
\author[Lindberg]{Julia Lindberg}
\address{
Julia Lindberg \\
University of Texas-Austin \\ USA
}
\email{julia.lindberg@math.utexas.edu}
\urladdr{\url{https://sites.google.com/view/julialindberg/home}}
\author[Am\'endola]{Carlos Am\'endola}
\address{
Carlos Am\'endola \\
Technische Universit\"at Berlin, Germany
}
\email{amendola@math.tu-berlin.de}
\urladdr{\url{http://www.carlos-amendola.com/}}
\author[Rodriguez]{Jose Israel Rodriguez}
\address{
Jose Israel Rodriguez\\
University of Wisconsin-Madison \\
 USA
} 
\email{jose@math.wisc.edu}
\urladdr{\url{https://sites.google.com/wisc.edu/jose/home}}
\begin{document}

\begin{abstract}
  The method of moments is a classical statistical technique for density estimation that solves a system of moment equations to estimate the parameters of an unknown distribution. A fundamental question critical to understanding identifiability asks how many moment equations are needed to get finitely many solutions and how many solutions there are. We answer this question for classes of Gaussian mixture models using the tools of polyhedral geometry. In addition, we show that a generic Gaussian $k$-mixture model is identifiable from its first $3k+2$ moments. Using these results, we present a homotopy algorithm that performs parameter recovery for high dimensional Gaussian mixture models where the number of paths tracked scales linearly in the dimension. \\
  \hfill \\

  \textbf{Key words: } Gaussian mixtures models, method of moments, parameter recovery, sparse polynomial systems
\end{abstract}

\maketitle

\section{Introduction}

A fundamental problem in statistics is to estimate the parameters of a density from samples. This problem is called \emph{density estimation} and formally it asks: given $n$ samples from an unknown distribution $p$, can we estimate $p$? To have any hope of solving this problem we need to assume our density lives in a family of distributions.  One family of densities known as Gaussian mixture models are a popular choice due to their broad expressive~power.

\begin{thm}\cite[Ch.~3]{Goodfellow-et-al-2016} \label{thm:gmmapproximator}
A Gaussian mixture model is a universal approximator of densities, in the sense that any smooth density can be approximated with any speciﬁc nonzero amount of error by a Gaussian mixture model with enough~components.
\end{thm}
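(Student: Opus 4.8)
The plan is to establish this as a density statement via a two-step approximation: first replace the target density by a smoothed version that is literally an infinite mixture of Gaussians, then discretize the mixing measure to obtain a finite mixture. Fix a smooth density $p$ on $\mathbb{R}^d$ and a tolerance $\varepsilon > 0$, measured in a fixed norm on densities such as $L^1$ or (after a harmless truncation of the argument) the uniform norm. Write $\phi_\sigma$ for the density of $\mathcal{N}(0,\sigma^2 I_d)$.

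\emph{Step 1 (mollification).} Consider $p_\sigma := p * \phi_\sigma$. Since $\{\phi_\sigma\}_{\sigma>0}$ is an approximate identity, standard mollification results give $p_\sigma \to p$ as $\sigma \to 0^+$ in the chosen norm: in $L^1$ this is the classical continuity of translation, and in the uniform norm it follows from uniform continuity of $p$ on a large ball together with control of the tails. Choose $\sigma$ small enough that $\|p_\sigma - p\| < \varepsilon/2$. The key observation is that $p_\sigma(x) = \int_{\mathbb{R}^d} \phi_\sigma(x-\mu)\,p(\mu)\,d\mu$, so $p_\sigma$ is already a (continuous) mixture of the Gaussians $\mathcal{N}(\mu,\sigma^2 I_d)$ with mixing density $p$.

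\emph{Step 2 (discretization of the mixing measure).} The assignment $\mu \mapsto \phi_\sigma(\,\cdot\, - \mu)$ is Lipschitz into the chosen function space, and $p$ has rapidly decaying tails, so we truncate the integral to a large compact box $K$ with error $< \varepsilon/4$, then replace $p\,\mathbf{1}_K\,d\mu$ by a finitely supported measure $\sum_{i=1}^{k} w_i \delta_{\mu_i}$ coming from a fine partition of $K$ (with $w_i \ge 0$ and $\sum_i w_i$ close to $1$), and finally renormalize to absorb the small mass defect. For a fine enough partition the resulting finite Gaussian mixture $q := \sum_{i=1}^{k} w_i\,\phi_\sigma(\,\cdot\, - \mu_i)$ satisfies $\|q - p_\sigma\| < \varepsilon/2$, and by the triangle inequality $\|q - p\| < \varepsilon$, with $q$ a Gaussian mixture of $k = k(\varepsilon)$ components (all sharing covariance $\sigma^2 I_d$, which is more than enough freedom).

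I expect the main obstacle to be the bookkeeping in Step 2 rather than any deep difficulty: one must commit to a norm at the outset and then verify the truncation and quadrature estimates \emph{uniformly}, in particular bounding the supremum over $x$ of the Riemann-sum error, which is exactly where the Lipschitz continuity of $\mu \mapsto \phi_\sigma(\,\cdot\, - \mu)$ and Gaussian tail bounds are used. Step 1 is routine approximate-identity analysis, and the whole argument gives the claimed statement: any smooth density is approximated to within any prescribed nonzero error by a Gaussian mixture with sufficiently many components.
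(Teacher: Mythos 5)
The paper does not prove this statement at all: it is quoted verbatim from Goodfellow, Bengio and Courville and used purely as motivation, so there is no in-paper argument to compare yours against. Your mollify-then-discretize proof is the standard way to make the claim precise, and it is essentially sound once a norm is fixed: convolving with $\phi_\sigma$ gives an approximate identity, so $\|p*\phi_\sigma - p\|_{L^1}\to 0$, and the continuous mixture $\int \phi_\sigma(x-\mu)p(\mu)\,d\mu$ is then replaced by a Riemann sum over a compact truncation, with the quadrature error controlled by the $L^1$-Lipschitz constant $\|\nabla\phi_\sigma\|_{L^1}$ of $\mu\mapsto\phi_\sigma(\cdot-\mu)$ and the truncation error by the tail mass of $p$. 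Two small points deserve explicit attention if you write this up: (i) the renormalization at the end changes the mixture by at most $\delta/(1-\delta)$ in $L^1$, where $\delta$ is the discarded tail mass, so it really is absorbed into the error budget as you claim; and (ii) for the sup-norm version you need $p$ to vanish at infinity (which holds for a uniformly continuous integrable density) before the truncation step is legitimate, so it is cleanest to commit to $L^1$ throughout. With those caveats your argument establishes the theorem, and in fact proves slightly more than the paper needs, since all components share the common covariance $\sigma^2 I_d$.
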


\Cref{thm:gmmapproximator} motivates our study of Gaussian mixture models. These are ubiquitous in the literature with applications in modeling geographic events \cite{GMM_geography}, the spread of COVID-19 \cite{GMM_covid}, the design of planar steel frame structures \cite{GMM_material_science}, speech recognition~\cite{reynolds1992gaussian,zhang1994using, sha2006large}, image segmentation \cite{GMM_image_segmentation} and biometrics \cite{hosseinzadeh2008gaussian}.

A \emph{Gaussian random variable}, $X$, has a probability density function given by
\begin{align*}
    f(x | \mu, \sigma^2) = \frac{1}{\sqrt{2 \pi \sigma^2}} \exp \left( - \frac{(x - \mu)^2}{2 \sigma^2} \right),
\end{align*}
where $\mu \in \mathbb{R}$ is the mean and $\sigma \in \mathbb{R}_{>0}$ is the standard deviation. In this case we write $X \sim \mathcal{N}(\mu, \sigma^2)$. A random variable $X$ is the  \emph{mixture of $k$ Gaussians} if its probability density function is the convex combination of $k$ Gaussian densities. Here we write $X \sim \sum_{\ell=1}^k \lambda_\ell \mathcal{N}(\mu_\ell, \sigma_\ell^2)$ where $\mu_\ell \in \mathbb{R}$, $ \sigma_\ell \in \mathbb{R}_{>0}$ for $\ell \in [k]= \{1,\ldots, k\}$ and $(\lambda_1,\ldots,\lambda_k) \in \Delta_{k-1} = \{ \lambda \in \mathbb{R}_{> 0}^k : \sum_{\ell=1}^k \lambda_\ell = 1 \}$. 
Each $\lambda_\ell$, $\ell \in [k]$, is the mixture weight of the $\ell$th component. 

A \emph{Gaussian $k$-mixture model} is a collection of mixtures of $k$ Gaussian densities. Often one imposes constraints on the means, variances or weights to define such models. 
For example, one might assume that all variances are equal or that the mixture weights are all equal. 
The former is known as a \emph{homoscedastic} mixture model, and the latter is called a \emph{uniform} mixture model.
In this paper we consider three classes of Gaussian mixture models.
\begin{enumerate} 
\item The $\bar \lambda$-weighted model where the mixture weights are fixed for $\bar \lambda \in \Delta_{k-1}$.
\item\label{item-class-homoscedastic} The $\bar \lambda$-weighted homoscedastic model, which is the $\bar \lambda$-weighted model under the additional assumption that the variances are equal.
\item The $\bar \lambda$-weighted known variance model with known weights and variances.
\end{enumerate}

We wish to do parameter recovery for these classes of Gaussian mixture models, that is, we would like to solve the following problem.

\begin{problem}\label{problem:densityestimationforGMM}
Given $y_1,\ldots, y_N$ sampled from a mixture of $k$ Gaussian densities, recover the parameters $\mu_\ell, \sigma_\ell^2, \lambda_\ell$ for $\ell\in [k]$. 
\end{problem}

These samples can be summarized by sample moments. Motivated by \Cref{problem:densityestimationforGMM} and the \emph{method of moments}, in this paper we solve the following problem.

\begin{problem}\label{problem:moments_populaton}
Given moments from a 
Gaussian $k$-mixture density,
recover the parameters of each Gaussian component.
\end{problem}

The fact that one can recover the mean, variance and weight of each component, up to relabeling, if given the mixture density function
is known as \emph{mixture identifiability} and was proven for univariate Gaussians in 
\cite{teicher1963identifiability}. Multivariate Gaussians are still identifiable \cite{yakowitz1968identifiability} and there exist finitely many moments that identify them~\cite{BS2015}. In this paper we are interested in identifying the means and variances of each mixture from its moments up to a certain order. 

It is important to  distinguish parameter recovery from
density estimation.
For density estimation, one aims to estimate a density that is close to the true mixture density, with no restriction on how close each individual component is.
In this paper we wish to do parameter recovery. Namely, we wish to recover accurate estimates of the mean, variance and mixing weight of each component. It is clear that density estimation follows once all of the parameters are known.
The distinction between density estimation and parameter recovery is illustrated~next.

\begin{example}\label{ex:eight-moments}
Consider the vector of eight moments 
$(m_1,\ldots,m_8)$:
\[ ( 0.1661, \ 2.133, \ 1.3785, \ 12.8629, \ 16.0203, \ 125.6864, \ 239.2856, \ 1695.5639). \]

\begin{figure}[htbp!]
    \centering
    \includegraphics[width = 0.3\textwidth]{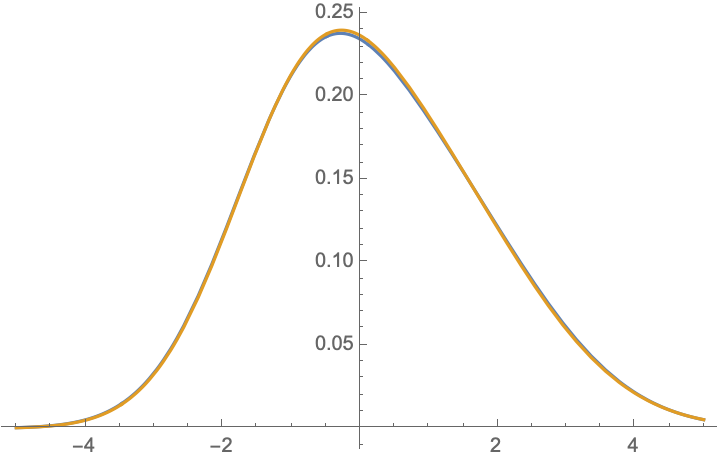}
    \caption{Two distinct Gaussian mixture densities with $k = 3$ components and the same first eight moments.}
    \label{fig:param_recov_ex}
\end{figure}

There are two Gaussian $3$-mixture densities with these eight moments. 
The weights and individual components for each of the mixture models are 
shown in \Cref{fig:individual_components}.
\begin{figure}[htpb!]
    \centering
    \includegraphics[width = 0.35\textwidth]{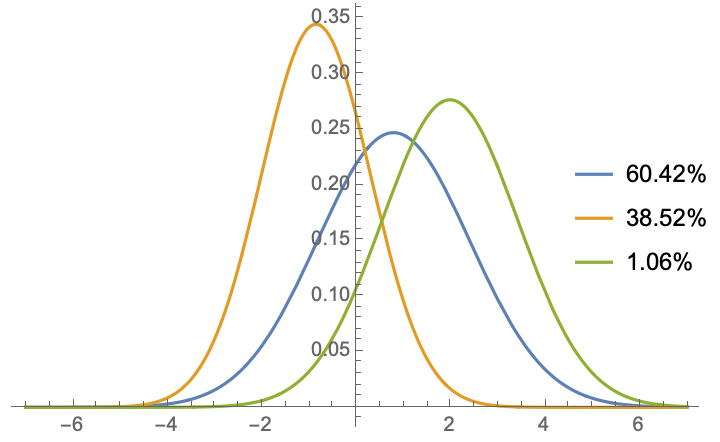}
    \includegraphics[width = 0.35\textwidth]{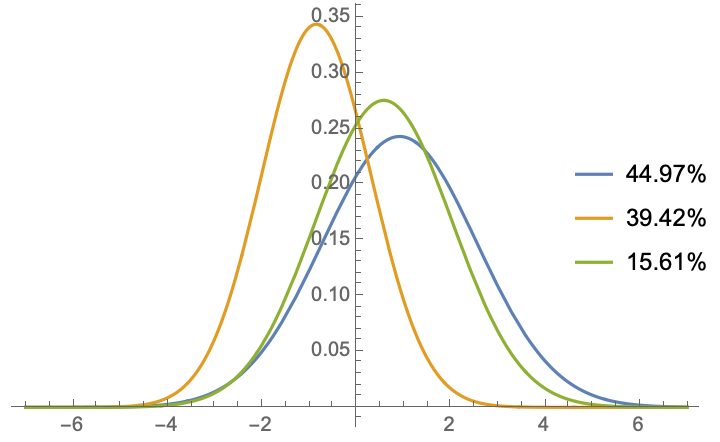}
    \caption{Individual components of two Gaussian mixture models with similar mixture densities.}
    \label{fig:individual_components}
\end{figure}
These densities are shown in \Cref{fig:param_recov_ex} where it is seen that they are almost indistinguishable. In contrast, the individual components and weights are noticeably different. The weights and individual components for each of the mixture models are 
shown in \Cref{fig:individual_components}.

\end{example}

One idea to solve
\Cref{problem:densityestimationforGMM} is to use \emph{maximum likelihood estimation}. Maximum likelihood estimation aims to maximize the likelihood function by solving the following optimization~problem:
\begin{align}
   \text{argmax}_{\mu, \sigma^2, \lambda} \ \  \prod_{i=1}^N  \sum_{i=1}^k \lambda_i \frac{1}{\sqrt{2 \pi \sigma_i^2}} \exp \Big( - \frac{(y_i - \mu_i)^2}{2\sigma_i^2} \Big).\label{mle}
\end{align}

Unless $k=1$, \eqref{mle} is a nonconvex optimization problem and obtaining a global optimum is difficult or impossible. In addition, the maximum likelihood estimator is not always consistent for Gaussian mixture models \cite[Section 3.2]{chen2017consistency}. In general \eqref{mle} is unbounded, so no global maximum exists. Iterative algorithms such as expectation maximization (EM) try to find the largest local maximum  \cite{dempster1977maximum}. On top of being sensitive to the starting point, another downside of the EM algorithm is that it needs to access all data in each iteration, which is prohibitive for applications with large data sets. 
Further, there is no 
 known bound on the number of critical points of the likelihood function, and it has been shown that no such bound can exist that is independent of the sample size \cite{amendola2015maximum}.

Recent work has analyzed the local behavior of \eqref{mle} by considering maximum likelihood estimation for two Gaussians in $\mathbb{R}^n$ when $\lambda_1 = \lambda_2 = \frac{1}{2}$ and all of the covariance matrices are known and equal
--- this is a special case of the models we consider.
It has been shown that in this regime the EM algorithm converges geometrically to the global optimum \cite{daskalakis2017ten,xu2016global}.
Other work has studied the global landscape of the EM algorithm and the structure of local optima in this setting  \cite{chen2020likelihood,xu2016global}. Further work has considered inference for Gaussian mixture models with known mixing coefficients and identity covariance matrices \cite{mei2018landscape} and clustering analysis of the mixture of two Gaussians where the covariance matrices are equal and unknown \cite{cai2019chime}. When these covariance matrices are further assumed to be spherical, \cite{sanjeev2001learning} gives polynomial time approximation schemes for \eqref{mle}
so long as the mixtures are sufficiently well-separated.
Recently, techniques from numerical algebraic geometry have been used to identify the number of components in a Gaussian mixture model \cite{SAG2020}. Further progress has been made on giving optimal sampling bounds needed to learn a Gaussian mixture model \cite{ashtiani2020near}.

Another idea for density estimation in this set up is to use the \emph{generalized method of moments}. The generalized method of moments was proposed in \cite{hansen1982large} and aims to minimize the difference between the fitted moments and the sample moments. For Gaussian mixture models, this again cannot be solved in a way guaranteeing global optimality due to the nonconvexity of the moment equations. Recently this method has been remedied for Gaussian mixture models in one dimension with the same variance parameter, where the authors provably and efficiently find the global optimum of the generalized method of moments \cite{wu2020optimal}. It is important to note that in many of the cases above, assumptions are made on the values that each Gaussian component can take.

In this paper we propose using the method of moments to estimate the density arising from the mixture of $k$ multivariate Gaussians. This methodology was first proposed and resolved for the mixture of two univariate Gaussians by Karl Pearson~\cite{pearson1894contributions}. 
Pearson reduced solving this system of $6$ polynomial equations in the $6$ unknown density parameters, $\mu_\ell, \sigma_\ell^2, \lambda_\ell$, $\ell = 1,2$, to understanding the roots of a degree nine polynomial with coefficients in the sample moments $\overline{m}_i$, $i=1,\dots,5$. 

The method of moments for Gaussian mixture models was revisited in 2010 in a series of papers \cite{kalai2010efficiently,moitra2010settling}. The case of a $k = 2$ mixture model in $n$ dimensions was handled in \cite{kalai2010efficiently} where a polynomial time algorithm was presented. This approach was generalized in \cite{moitra2010settling} where an algorithm for a general $k$-mixture model in $n$ dimensional space was presented that scales polynomially in $n$ and the number of samples required scales polynomially in the desired accuracy. When $k=2,3,4$ the number of solutions to the corresponding moment system are given in \cite{pearson1894contributions,amendola2016moment,amendola2016solving} respectively.

\subsection{Contributions}
The main contribution of this paper is to present new identifiability results for classes of Gaussian mixture models. In \Cref{thm:improv1} we show that a generic Gaussian mixture model is uniquely identifiable (up to label swapping) 
from its first $3k+2$ moments, improving the existing bound of $4k-2$. We then consider families of Gaussian mixture models where some of the parameters are assumed known. 
In \Cref{sec:one_d_moment_varieties} we study the moment equations through the lens of polyhedral geometry and computational algebra. We use this perspective to give algebraic identifiability (\Cref{def:algebraic}) results for three important classes of Gaussian mixture models (\Cref{lem:algebraic_identifiability_lambdas_known}, \Cref{prop:lambda_known_sigma_equal_alg_ident}, \Cref{cor:identifiability_means_unknown}). 

Motivated to perform parameter recovery, we extend these identifiability results by bounding the number of solutions to the moment equations (\Cref{thm:degree_lambdas_unknown}, \Cref{thm:means_unknown_sigma_equal} and \Cref{thm:degree_mus_unknown}). The proofs of these results lead to a homotopy continuation algorithm that implements the method of moments for Gaussian mixture models. The results in the univariate setting, lead to a homotopy method for Gaussian mixture models in $\mathbb{R}^n$ that tracks $\mathcal{O}(n)$ paths. This is because the moment equations for one dimensional Gaussian mixture models are a subsystem of the moment equations for multidimensional Gaussian mixture models. We compare our algorithm with EM and show that when the sample size is large, our algorithm returns solutions that are competitive with EM in a fraction of the time (\Cref{tab:mom_v_EM}).

\section{Preliminaries}\label{sec:preliminaries}
We first introduce some necessary concepts from statistics and algebraic geometry.

\subsection{Method of moments}
This paper focuses on 
an approach for parameter recovery known as the \emph{method of moments}. The method of moments for parameter estimation is based on the law of large numbers. This approach expresses the moments of a density as functions of its parameters.

For $r \geq 0$, we denote the $r$-th 
moment of  a random variable $X$ as
$m_r = \mathbb{E}[X^{r}] $. We consider a statistical model with $n$ unknown parameters, $\theta = (\theta_1,\ldots, \theta_n)$, and consider the moments up to order $n$ as functions of $\theta$, $g_1(\theta),\dots,g_n(\theta)$. For random samples $y_1,\ldots, y_N$ we denote the sample moment by $ \overline{m}_r = \frac{1}{N} \sum_{j=1}^N y_j^r.$

The method of moments works by using samples from the statistical model to calculate sample moments $\overline{m}_1,\ldots, \overline{m}_n$,  and then solve the corresponding system $\overline{m}_i = g_i(\theta)$, $i = 1, \ldots , n$, for the parameters $\theta_1,\ldots, \theta_n$. 

The moments of Gaussian distributions are polynomials in the 
two indeterminants 
$\mu, \sigma^2$ and can be calculated recursively as $M_0(\mu, \sigma^2) = 1,  M_1(\mu, \sigma^2) = \mu$ and
\begin{equation}
\begin{aligned}
    M_i(\mu, \sigma^2) &= \mu \cdot M_{i - 1}(\mu,\sigma^2) + (i - 1) \cdot \sigma^2 \cdot M_{i -2}(\mu,\sigma^2),\quad i\geq 2. \label{momentdef1}
\end{aligned}
\end{equation}
 We calculate the $i$th moment of a mixture of $k$ Gaussian densities as the convex combinations of $M_i(\mu_1, \sigma_1^2), \ldots, M_i(\mu_k, \sigma_k^2)$.

For each non-negative integer $i$, 
we define the polynomial $f_i^k$  
in the $3k$ indeterminants unknowns $\mu_\ell,\sigma_\ell,\lambda_\ell,\, \ell\in [k]$ and sample moment $\overline{m}_i$ to be 
\begin{align}\label{f_polynomial_moment}
f_i^k(\boldsymbol{\mu},\boldsymbol{\sigma^2},\boldsymbol{\lambda}):= 
\lambda_1 M_i(\mu_1,\sigma_1^2)+\cdots +
\lambda_k M_i( \mu_k,  \sigma_k^2) - \overline{m}_i. 
\end{align}
We are then interested in solving the system
\begin{align}\label{moment_polynomials}
f_i^k(\boldsymbol{\mu},\boldsymbol{\sigma^2},\boldsymbol{\lambda}) 
= 0 
\end{align}
under assumptions on the parameters, 
$\mu, \sigma^2,\lambda$ where $i$ varies over an appropriate index set.

We use an overline, e.g., $\overline x$ to indicate a quantity is fixed and known. For the three classes of Gaussian mixture models defined in the introduction, we assume $\overline \lambda \in \Delta_{k-1}$ 
and we solve these respective systems:

\begin{align} 
    & f_i^k(\boldsymbol{\mu},\boldsymbol{\sigma^2},\boldsymbol{\overline \lambda})=0,
    & i\in[2k],
    \quad
    & \text{(\Cref{sec:lambda-weighted})} \label{eq:model1}
    \\
    & f_i^k(\boldsymbol{\mu},\boldsymbol{\sigma^2},\boldsymbol{\overline \lambda})=0,
    &  i\in[k+1], \quad \sigma_1^2=\sigma_2^2=\dots =\sigma_k^2,
    \quad
    & \text{(\Cref{sec:lambda-weighted-homoscedastic})} \label{eq:model2}
    \\
    & f_{i}^k(\boldsymbol{\mu},\boldsymbol{\overline \sigma^2},\boldsymbol{\overline \lambda})=0,
    & i\in[k],
    \quad&\text{(\Cref{sec:means_unknown})}. \label{eq:model3}
\end{align}

\subsection{Genericity}
The concept of generic behavior is prevalent in applied algebraic geometry.
It allows one to make mathematically precise statements without characterizing complicated algebraic sets. 
One says a property is \emph{generic} if it holds on the complement of a proper algebraic subset. In particular, this algebraic subset is lower dimensional, so that non-generic behavior is restricted to a Lebesgue measure-zero set.  
We now define what genericity means in our context.

Consider a statistical model $\MM$ parameterized by $\Theta$, 
where $\theta\mapsto \ff_\theta\in \MM$
and
$\dim(\Theta) = d$.
Let 
\begin{equation}\label{eq:truncated-moment-map}
\phi:\Theta\to \mathbb{R}^d
\end{equation}
be a \emph{truncated moment map} that takes $\theta$ to a list of $d$ moments of the distribution  $\ff_\theta$ of prescribed order.
We assume throughout the article that $\phi$ is a polynomial map.

We define the set of \emph{generic parameters} in $\Theta$ 
(with respect to $\phi$) to be 
\begin{equation}\label{eq:generic-parameters}
\Theta_{\phi,\generic} :=
\{
\theta\in \Theta : \det \Jac(\phi)|_\theta \neq 0
\}.
\end{equation}
The set $\Theta_{\phi,\generic}$ is the complement of an algebraic set restricted to $\Theta$.  Therefore,
when $\Theta_{\phi,\generic}$ is nonempty, 
we have
$\Theta\setminus\Theta_{\phi,\generic}$ is a Lebesgue measure-zero set.

Our next goal is to define generic moments. To do so, first
we define the set $\mathcal{B}_\phi$ in the ambient space of moments to be the smallest 
 algebraic set in $\mathbb{C}^d$ containing 
\begin{equation}\label{eq:branch-locus}
    \{ \phi(\theta)\in \mathbb{R}^d : 
        \det \Jac(\phi)|_\theta =0
    \}
    \subset  \mathbb{C}^d.
\end{equation}
The set of \emph{generic moments} (with respect to $\phi$) is taken to be the set complement
\begin{equation}\label{eq:generic-moments}
\mathbb{C}^d\setminus \mathcal{B}_\phi,
\end{equation}
while the set of \emph{realizable generic moments} is 
\begin{equation}\label{eq:realizable}
\VV_{\phi} := 
\{ 
(m_1,\dots,m_d)\in \phi(\Theta) 
: (m_1,\dots,m_d)\not\in \mathcal{B}_\phi
\}.
\end{equation}

The set 
$\CC^d\setminus \mathcal{B}_\phi$
is the complement of an algebraic set, which allows us to leverage tools from algebraic geometry. 
Proving statements about $\CC^d\setminus \mathcal{B}_\phi$ allows us to make statements relevant for parameter recovery by restricting to realizable generic moments $\VV_\phi$.

\begin{example}\label{ex:k2}
Fix $\overline \lambda=(\frac{1}{2},\frac{1}{2})$ in $\Delta_1$.
As our model, we consider the collection of $\overline \lambda$-weighted Gaussian $2$-mixtures with mean $0$ and second moment fixed to be $1$:
\[
\left\lbrace  (\mu_1,\mu_2,\sigma_1^2,\sigma_2^2) 
:\,
\frac{1}{2}\mu_1+
\frac{1}{2}\mu_2 = 0,
\, 
\frac{1}{2}(\mu_1^2 + \sigma_1^2)+
\frac{1}{2}(\mu_2^2 + \sigma_2^2) = 1
\right\rbrace.
\]
The model can be  reparameterized using 
$\mu_1$ and $\sigma_1^2$ because 
\begin{equation}\label{eq:reduction}
    (\mu_2,\sigma_2^2)
=(-\mu_1,  -2\mu_1^2-\sigma^2_1+2).
\end{equation}

Thus, choose $\phi$ to map $(\mu_1,\sigma_1^2)\in \Theta$ to the third and fourth order moments.
To be specific,
$\phi(\mu_1,\sigma^2_1) = 
(m_3,m_4)$ where 
\[
\begin{array}{llc}
     m_3 \quad &= \quad &  
     \frac{1}{2} \mu_1 (\mu_1^2+3\sigma_1^2)
+
\frac{1}{2} \mu_2 (\mu_2^2+3\sigma_2^2)
\\
m_4  \quad & =\quad &\frac{1}{2}
(\mu_1^2(\mu_1^2+6\sigma_1^2)+3\sigma_1^4)
+
\frac{1}{2}
(\mu_2^2(\mu_2^2+6\sigma_2^2)+3 \sigma_2^4).
\end{array}
\]

The set of generic parameters are those where this $2\times 2$ Jacobian matrix is invertible
\[
\begin{bmatrix}
 3( 3\mu_1^2 + \sigma_1^2 - 1) & 3\mu_1\\
 4 \mu_1^3+12 \mu_1(\sigma_1^2-1) & 6(\mu_1^2+\sigma_1^2-1)
\end{bmatrix}.
\]
In other words, non-generic parameters 
are those such that the determinant of the matrix vanishes, i.e., this equation is satisfied
\[7\mu_1^4+6\mu_1^2(\sigma_1^2-1) +3(\sigma_1^2-1)^2=0.\] 
For this model, the set of generic moments
is the complement of the algebraic set 
\[ \mathcal{B}_\phi= \{
(m_3,m_4)\in \CC^2 :
3m_3^4+2(m_4-3)^3 =0
\}.\]
For example, 
the moments $(m_3,m_4)=(0,3)\in \mathcal{B}_\phi$ are not generic.
This makes sense from a statistical viewpoint, since $(m_1,m_2,m_3,m_4)=(0,1{,}0,3)$ are the moments of a single standard Gaussian distribution, and as such it is not an honest Gaussian 2-mixture.

The set of realizable generic moments is the open set 
\[
\VV_\phi
=\{  (m_3,m_4)\in \phi(\Theta) :  3m_3^4+2(m_4-3)^3 \neq 0 \}.
\]
For $(m_3,m_4)\in \VV_\phi$,
the corresponding moment system has six complex solutions. 
\end{example}

\begin{center}
\begin{figure}
\centering
\begin{overpic}[width=15em,trim={0 5cm 20cm 0cm},clip]{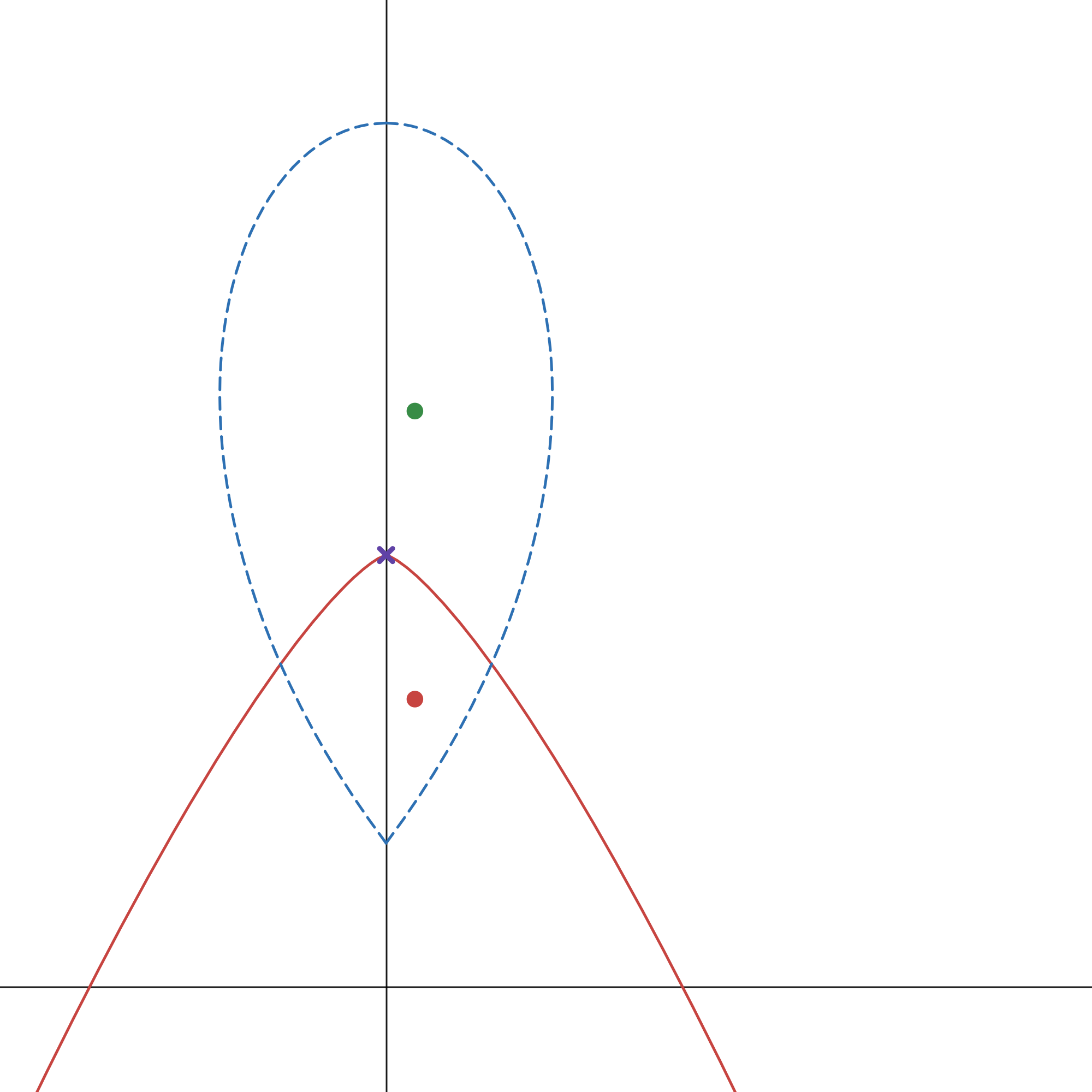} 
\put (80,2) {$\displaystyle m_3$}
\put (40,95) {$\displaystyle m_4$}
\end{overpic}
    \caption{
    The region bounded by the dotted blue curve is the space of realizable moments for this model. 
    The red curve given by 
$3m_3^4+2(m_4-3)^3 =0$ is the set of non-generic moments. The point $(0,3)$ is labeled by a purple cross is non-generic and corresponds to the third and fourth moment of a univariate Gaussian. 
The red point $(\frac{1}{5},2)$ has four real solutions to the moment equations while the green point $(\frac{1}{5},4)$ has two real solutions.
Any choice of moments in the region bounded by the dotted curve and in the complement of the red curve will yield a set of moment equations with six complex solutions, of which two are statistically meaningful (one up to label swapping symmetry).
}
    \label{fig:k2-region-moments}
\end{figure}
\end{center}

\begin{defn}\label{def:algebraic}
A statistical model parameterized by $\Theta\subset \mathbb{R}^d$ is said to be \emph{algebraically identifiable} from $\phi$ 
in \eqref{eq:truncated-moment-map},
if $\Theta_{\phi,\generic}$ is $d$-dimensional.
\end{defn}

In other words, 
for realizable generic moments $\phi(\theta^\ast)$, there are at most finitely many other parameters $\theta$ in $\Theta$  such that $\phi(\theta)=\phi(\theta^*)$.
Similarly, we also consider a notion of identifiability
where $\phi(\theta)=\phi(\theta^*)$ implies $\theta =\theta^*$.

\begin{defn}\label{def:rational}
A statistical model parameterized by $\Theta\subset \mathbb{R}^s$ is \emph{rationally identifiable} from moment $m_1,\dots,m_d$ if the complexified truncated $d$ moments map 
\[ \phi_{\mathbb{C}} \ : \ \mathbb{C}^s \to \mathbb{C}^d \]
is injective on the complement of a proper algebraic subset in $\CC^s$.
\end{defn} 

Computationally, \Cref{def:rational} can be understood as saying that for generic $m$ in the image of $\phi (\Theta)$, the system of equations $\phi(\Theta) = m$ has a unique solution over $\mathbb{C}$.

In this work we focus on Gaussian $k$-mixture models. The moment equations for these models and their submodels can have a label swapping symmetry,  as we saw in \Cref{ex:k2} and \Cref{fig:k2-region-moments}. Thus, when we speak of a Gaussian $k$-mixture model being rationally identifiable, we mean there is a unique solution to the moment equations up to symmetry.

A result in \cite{amendola2018algebraic} states that univariate Gaussian $k$-mixture models are algebraically identifiable from $\phi(\theta)=(m_1,\dots,m_{3k-1})$. This means that for 
generic moments $m_1,\dots,m_{3k-1}$, there are finitely many Gaussian $k$-mixture distributions with exactly those moments.
In particular, the polynomial system \eqref{moment_polynomials} for $i= 0,\ldots,3k-1$ has finitely many solutions. 
Our results in \Cref{sec:one_d_moment_varieties}
are for three classes of submodels of Gaussian $k$-mixtures. 
These are far from trivial as the moments needed to algebraically identify a submodel cannot be immediately determined. Moreover, the number of solutions to the moment systems for the submodel and model can be different, as \Cref{ex:k2} illustrates.

\subsection{Statistically meaningful solutions}

For any set of real-valued sample moments, it is not guaranteed that the moment equations will give 
any statistically meaningful solutions. A \emph{statistically meaningful solution} is a {real valued solution} with positive variances and mixing weights. 
In other words, it is a solution that corresponds to a true density. If the sample moments are inaccurate, it may happen that no solution obtained from the method of moments is statistically meaningful.
By the law of large numbers, as the number of samples goes to infinity the sample moments will converge to the true moments and the method of moments will return a consistent estimator~\cite[Theorem 9.6]{consistentMOM}.

A property of parameterized polynomial systems is that the number of real solutions is constant in open Euclidean sets of the parameter space.
Such open sets can be computed via cylindrical algebraic decomposition \cite{CAD}. The constraints differing real solutions and statistically meaningful ones will further divide these cells. Therefore, in any of these open cells the number of statistically meaningful solutions will be constant. So long as the sample moments lie in a cell that has at least one statistically meaningful solution, which will happen with increasing probability as the sample size increases, the method of moments will return a true density.

\subsection{Sparse polynomial systems}\label{ss:sparsePoly}

We now review concepts from algebraic geometry, but  
for an introduction to sparse polynomial systems see \cite[Ch.~3]{sturmfels2002solving}.
For  each 
$\alpha = (\alpha_1,...,\alpha_n)\in \NN^n$, we have
the monomial $x^\alpha := x_1^{\alpha_1}\cdots x_n^{\alpha_n}$ with exponent $\alpha$. 
A polynomial is a linear combination of monomials. A polynomial is called \emph{sparse} if we know a priori which monomials appear with non-zero coefficients in that polynomial.

Let $\AS_\bullet =(\AS_1,\dots, \AS_k)$ denote a $k$-tuple of nonempty finite subsets of $\NN^n$.
A sparse polynomial system of equations with support $\AS_\bullet$ is given by 
\begin{align*}
\begin{split}
    \sum_{\alpha \in \AS_1} c_{1,\alpha}x^\alpha = \ldots = \sum_{\alpha \in \AS_k} c_{k,\alpha}x^\alpha =0, 
\end{split}
\end{align*}
where $\{c_{i,\alpha } \}_{\alpha\in \AS_i, i\in [k]}$ are the scalar coefficients.  

\begin{rem}
    Observe that we say a sparse polynomial is any polynomial defined by its monomial support. This may be a bit misleading since, in principle, a sparse polynomial could be a linear combination of all $\binom{n+d-1}{n}$ monomials of degree $d$ in $n$ variables. In our case, and as is often the case in applications, the polynomial systems under consideration consist of polynomials that are linear combinations of significantly fewer than $\binom{n+d-1}{n}$ monomials, coinciding with the other common use of `sparse' in the literature.
\end{rem}

The objective in studying sparse polynomial systems is to derive results based off of the monomial support $\mathcal{A}_\bullet$.
The non-trivial extreme case is when each monomial support $\mathcal{A}_i$ has precisely two elements. Each polynomial in the system will have two terms. Such systems are called \emph{binomial systems}. Solving binomial systems where the number of unknowns is equal to the number of equations amounts to computing a Smith or Hermite normal form \cite{CL-binomials}. There are polynomial time algorithms for computing these normal forms \cite{KB1979-smith-hermite}.

\subsection{Mixed volumes}\label{ss:mixedVolume}

Consider two polytopes $P,Q \in \mathbb{R}^n$. The \emph{Minkowski sum} of $P$ and $Q$ is defined as
\[ P + Q := \{ p+q : p \in P, \ q \in Q \}. \]
A \emph{scaling} by $\aaa\in \mathbb{R}_{>0}$ of a polytope $P$ is defined as $\aaa P:= \{\aaa p : p\in P\}$.
Given $s$ convex polytopes in $\mathbb{R}^n$, $K_1,\ldots, K_s$, we consider the standard $n-$dimensional Euclidean volume of a linear combination of these polytopes 
\[
\nu(\aaa_1,\ldots, \aaa_s) = \vol_n\left(\sum_{i=1}^s \aaa_i K_i\right)
\]
where the sum here refers to the Minkowski sum and the  $\aaa_i$ refers to a scaling. 
The polynomial
$\nu(\aaa_1,\ldots, \aaa_s)$ is homogeneous of degree $n$ in $\aaa_1,\ldots, \aaa_s$.

The \emph{mixed volume} of $s$ convex polytopes $K_1,\ldots,K_s$ in $\mathbb{R}^n$ is the coefficient of the $\aaa_1\aaa_2\cdots \aaa_s$ term in $\nu(\aaa_1,\ldots, \aaa_s)$. It is denoted $\mvol(K_1,\ldots, K_s)$. More information on mixed volumes and polytopes can be found in \cite{mvoltext}.

\begin{example}\label{ex:line-segment-det}
The mixed volume of 
$K_1,\dots,K_n$ is easy to describe when 
$K_i$ is a line segment from the origin to the vertex $v_i\in \mathbb{Z}^n$.
The Minkowski sum $\aaa_1 K_1+\cdots +\aaa_n K_n$ is a parallelepiped.
Hence, its volume is given by a determinant:  
\[ \nu ( \aaa_1,\aaa_2,\dots, \aaa_n) = 
\left|
\det 
\begin{bmatrix}
\aaa_1 v_1 & \aaa_2 v_2 &\ldots & \aaa_n v_n
\end{bmatrix}\right|.
\]
In this case, 
$\mvol(K_1,\dots,K_n)$ 
equals the absolute value of the determinant of the matrix with the vertices $v_1,\dots,v_n$ as its columns. 
\end{example}

\subsection{BKK bound}
A celebrated series of papers \cite{bernshtein1979the,khovanskii1978newton,kouchnirenko1976polyedres} gives the connection between sparse polynomial systems described in \Cref{ss:sparsePoly} and polyhedral geometry described in \Cref{ss:mixedVolume}.

Consider a polynomial $f= \sum_{\alpha \in \mathcal{A}} c_\alpha x^\alpha \in \mathbb{R}[x_1,\ldots, x_n]$ with monomial support $\mathcal{A} \subset \mathbb{N}^n$ where $x^\alpha$ denotes $x_1^{\alpha_1}\cdots x_n^{\alpha_n}$. The \emph{Newton polytope} of $f$ is the convex hull of its exponent vectors $\alpha \in \mathcal{A}$, denoted $\newt(f)$. 

\begin{thm}[Bernstein-Khovanskii-Kouchnirenko Bound] \label{bkkbound}
Let $\mathcal{A} :=(\mathcal{A}_1,\ldots , \mathcal{A}_n)$ where $\mathcal{A}_i \subset \mathbb{N}^n$ and $0_n \in \mathcal{A}_i$ for all $i \in [n]$. Let $\mathcal{L}(\mathcal{A})$ be the collection of polynomials $(f_1,\ldots, f_n)$ where $\mathcal{A}_i$ is the support of $f_i$ and $P_i = \conv(\mathcal{A}_i)$ 
for all $i \in [n]$. Consider $F = (f_1,\ldots, f_n)$ where $F \in \mathcal{L}(\mathcal{A})$ and the number of complex solutions to $F = 0$ is finite. Then the number of complex solutions to $F= 0$ is less than or equal to $\mvol(P_1,\ldots, P_n)$. Moreover, for generic choices of $F \in \mathcal{L}(\mathcal{A})$,  equality holds.
\end{thm}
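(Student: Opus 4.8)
The plan is to establish the two halves of the theorem separately: first that a \emph{generic} $F\in\cL(\AS)$ has exactly $\mvol(P_1,\dots,P_n)$ isolated zeros in the algebraic torus $(\mathbb{C}^*)^n$, and then to deduce the inequality $\#\{\text{zeros of }F\}\le\mvol(P_1,\dots,P_n)$ for an arbitrary $F\in\cL(\AS)$ with finitely many zeros by a semicontinuity argument. Throughout I count solutions in $(\mathbb{C}^*)^n$; since $0_n\in\AS_i$ for each $i$, a generic member of $\cL(\AS)$ has all of its solutions in the torus, so this is the natural setting. For the generic equality I would use the polyhedral (toric) homotopy method, which is constructive and yields the mixed volume directly as a count of start solutions.

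First, for each $i$ choose a generic lifting function $\omega_i\colon\AS_i\to\R$. Taking lower hulls of the lifted configurations, the data $\omega_1,\dots,\omega_n$ jointly induce a regular fine mixed subdivision of the Minkowski sum $P_1+\cdots+P_n$: every full-dimensional cell decomposes uniquely as $F_1+\cdots+F_n$ with $F_i$ a face of $P_i$ and $\sum_i\dim F_i=n$, and by fineness the \emph{mixed} cells are exactly those with $\dim F_i=1$ for all $i$, i.e.\ Minkowski sums of one edge from each $P_i$. Now form the homotopy
\[
H_i(x,s)\;=\;\sum_{\alpha\in\AS_i} c_{i,\alpha}\,s^{\,\omega_i(\alpha)}\,x^\alpha,\qquad i\in[n],\ \ s\in(0,1],
\]
whose target is $H(x,1)=F(x)$. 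The crux is the degeneration $s\to 0$: a mixed cell with edge directions $v_1,\dots,v_n$ becomes, after a monomial substitution $x=s^{-a}y$ with $a$ read off from the supporting vertex of the lower hull, a rescaled subsystem whose limit as $s\to0$ is, up to multiplication by units, a system of $n$ binomials in $y$; by the computation in Example~\ref{ex:line-segment-det} this binomial system has exactly $|\det[v_1\ \cdots\ v_n]|$ solutions in the torus, which equals the Euclidean volume of that cell. Summing over all mixed cells gives $\mvol(P_1,\dots,P_n)$ — this is the additivity of mixed volume over a mixed subdivision (see, e.g., \cite[Ch.~3]{sturmfels2002solving}). Each such root is the leading term of a Puiseux-series branch of the curve $\{(x,s):H(x,s)=0\}$, and for generic $\{c_{i,\alpha}\}$ the resulting $\mvol(P_1,\dots,P_n)$ paths are smooth on $(0,1]$ and terminate at distinct torus points at $s=1$. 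Hence a generic $F\in\cL(\AS)$ has exactly $\mvol(P_1,\dots,P_n)$ isolated zeros in $(\mathbb{C}^*)^n$.

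For the general inequality, take $F\in\cL(\AS)$ with finite zero set and embed it in a general one-parameter family inside $\cL(\AS)$ joining it to a generic member $F'$. Each isolated zero $p$ of $F$ in $(\mathbb{C}^*)^n$ has a neighborhood containing no other zero of $F$; by continuity of roots, every sufficiently nearby member of the family has a zero there, and these neighborhoods are disjoint for distinct $p$. Therefore $\#\{\text{zeros of }F\}\le\#\{\text{zeros of }F'\}=\mvol(P_1,\dots,P_n)$.

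The step I expect to be the main obstacle is showing that, for generic coefficients, no homotopy path diverges — to infinity or to the boundary $\mathbb{C}^n\setminus(\mathbb{C}^*)^n$ — before $s=1$; equivalently, that the generic target has no zeros ``at infinity''. This is precisely where genericity is indispensable: without it the count can drop (Example~\ref{ex:generic_poly_system}). The cleanest way to secure it is to pass to a smooth projective toric variety $X_\Sigma$ whose fan refines the inner normal fans of $P_1,\dots,P_n$: each $f_i$ extends to a section of the nef line bundle $L_i$ with $\newt(f_i)=P_i$, a toric Bertini argument (a dimension count on each torus-orbit stratum) shows that for generic sections the common zero locus is reduced, finite, and contained in the dense torus, and its cardinality is the top intersection number $L_1\cdots L_n$ in the Chow ring of $X_\Sigma$. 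Since $L_1\cdots L_n=\mvol(P_1,\dots,P_n)$ — the standard dictionary between nef toric divisors and polytopes, obtained by polarizing the identity $(D_P)^n=n!\,\vol_n(P)$ — this route simultaneously re-proves the generic equality and, via the refined B\'ezout inequality on $X_\Sigma$, the upper bound for arbitrary $F$.
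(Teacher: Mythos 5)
The paper does not prove \Cref{bkkbound}: it is stated as a classical theorem attributed to \cite{bernshtein1979the,khovanskii1978newton,kouchnirenko1976polyedres}, with the affine refinement (counting all complex solutions rather than only torus solutions, under the hypothesis $0_n\in\mathcal{A}_i$) attributed to \cite{TY-li-paper}. So there is no in-paper argument to compare yours against. On its own terms, your outline follows the standard route: the generic count via the polyhedral homotopy of \cite{huber1995a} (regular fine mixed subdivision, binomial start systems from the mixed cells, additivity of the mixed volume over the mixed cells), the upper bound for arbitrary $F$ by persistence of isolated zeros under perturbation, and a toric compactification to control behavior at infinity. Two caveats. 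First, the genuinely hard steps --- that for generic lifts and coefficients the homotopy paths are smooth on $(0,1]$, end at distinct points, and none escapes to the toric boundary --- are asserted rather than proved; you correctly flag this as the main obstacle, but it is the bulk of the actual proof, so as written this is a sketch rather than a proof. Second, your semicontinuity paragraph only accounts for isolated zeros of $F$ in $(\mathbb{C}^*)^n$, whereas the theorem as stated bounds \emph{all} complex solutions; the same persistence argument applies to an isolated zero lying on a coordinate hyperplane (its perturbations land in the torus because a generic member of $\mathcal{L}(\mathcal{A})$ has no zeros off the torus, which is exactly where the hypothesis $0_n\in\mathcal{A}_i$ is used), but you should say this explicitly, since it is precisely the content of the refinement in \cite{TY-li-paper} that distinguishes the paper's statement from Bernstein's original torus count.
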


\begin{rem}
The BKK theorem is usually stated without the hypothesis  $0_n \in\mathcal{A}_i$. In this case, the BKK bound counts the number of complex solutions with nonzero coordinates. 
To get a bound on the number of complex solutions, 
the assumption $0_n \in \mathcal{A}_i$ was added in~\cite{TY-li-paper}.
\end{rem}

\begin{rem}
A special case of the BKK bound is the B\'ezout bound \cite[p. 47]{hartshorne_AG}. The B\'ezout bound says that if a polynomial system $F = (f_1,\ldots, f_n)$ has finitely many solutions, then the number of complex solutions with nonzero coordinates is bounded above by $d_1 \cdots d_n$ where $\deg(f_i) =d_i$.
\end{rem}

\Cref{bkkbound} states that for a generic
sparse polynomial system
in  $\mathcal{L}(\mathcal{A})$, the number of complex solutions equals the mixed volume of the system. An obstacle to applying this theorem is that the mixed volume is $\#$P-hard to compute~\cite{mvol_complexity}.

An important property of mixed volumes is that they are \emph{monotonic}. Namely, if $\hat{P}_1 \subseteq P_1$ then
\[ \mvol(\hat{P}_1,P_2,\ldots,P_n) \leq \mvol (P_1,P_2,\ldots,P_n). \]
Therefore by \Cref{ex:line-segment-det}, taking line segments 
$\conv(\{0,v_i\}) = Q_i \subseteq P_i$
for $i \in [n]$, is an easy way to get a lower bound on $\mvol(P_1,\ldots,P_n)$. We would like conditions under which such a lower bound is tight. 

\begin{defn}\cite[Definition VII.32]{mondal2021many}
Let $P_1,\ldots,P_m$ be convex polytopes in $\mathbb{R}^n$. We say that $P_1,\ldots,P_m$ are \emph{dependent} if there exists a nonempty subset $\mathcal{I} \subseteq [m]$ such that $\dim( \sum_{i \in \mathcal{I}}P_i) < |\mathcal{I}|$. Otherwise we say $P_1,\ldots, P_m$ are \emph{independent}. 
\end{defn}
This definition may be difficult to parse on a first read. But it is related to the usual definition of \emph{linear independence}: 
if each $P_i$ is a line through the origin, then the two ideas of dependent agree. 
Moreover, the collection of empty sets is independent.

Given a nonzero vector $ w \in \mathbb{R}^n$ and a convex polytope $P \subseteq \mathbb{R}^n$, we consider the function $P\to \mathbb{R}$, $x\mapsto \langle w,x\rangle$ with  $\langle (w_1,\dots,w_n),(x_1,\dots,x_n) \rangle
:=w_1x_1+\cdots w_nx_n$ denoting the standard inner product.
We denote  
the minimum value  $w$ takes on $P$ by $\val_w(P)$, and the points of $P$ where $w$ is minimized by $\init_w(P)$.
This set of points is a face of $P$, which 
we  call the \emph{face exposed} by $w$. 
 Specifically, we have
\[
\val_w(P) = \min_{x \in P} \ \langle w, x \rangle 
\quad \text{ and }\quad
\init_w(P) = \{x \in P : \langle w,x \rangle \leq \langle w,y \rangle \ \text{ for all } \ y \in P \}.
\]

If $f = \sum_{\alpha \in \newt(f)} c_\alpha x^{\alpha}$, we call
\[
\init_w(f) := \sum_{\alpha \in \init_w(\newt(f))} c_\alpha x^{\alpha}
\]
the \emph{initial polynomial} of $f$. 
For more background on initial polynomials, see
\cite[Chapter~7]{using2005cox}.

\begin{prop} \cite[Proposition VII.39]{mondal2021many} \label{prop:guarantee_maximality}
Let $P_i = \conv(\mathcal{A}_i)$ and $Q_i = \conv(\mathcal{B}_i) \subseteq P_i$ for $i \in [n]$. The following are equivalent:
\begin{enumerate}[leftmargin=*]
         \item \label{item:prop1} $\mvol(P_1,\ldots, P_n) = \mvol(Q_1,\ldots,Q_n) $
            \item\label{item:prop2} One of the following holds:
    \begin{enumerate}
        \item $P_1,\ldots,P_n$ are dependent i.e. $\mvol(P_1,\ldots,P_n) = 0$
        \item\label{item:3b} For each $w \in \mathbb{R}^n \backslash \{0\}$, the collection of polytopes 
        \[\{\init_w (Q_i) : Q_i \cap \init_w(P_i) \, \neq \, \emptyset  \}  \] 
        is dependent. 
    \end{enumerate}
\end{enumerate}
\end{prop}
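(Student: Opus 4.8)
The statement to be proven is Proposition~\ref{prop:guarantee_maximality}, which is attributed to \cite[Proposition 7.36]{mondal2018how}. Since this is a citation of an external result, the natural plan is to reconstruct its proof from the theory of mixed volumes and their monotonicity behavior under passing to faces. The plan is to use the well-known integral/recursive formula expressing a mixed volume in terms of mixed volumes of faces exposed by a fixed direction $w$, together with the characterization of when $\mvol$ vanishes (dependence), to reduce the equality $\mvol(P_1,\dots,P_n)=\mvol(Q_1,\dots,Q_n)$ to a statement that holds facet-by-facet.

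\emph{Step 1: The easy implication and the degenerate case.} First I would dispose of the case $\mvol(P_1,\dots,P_n)=0$. By Example~\ref{ex:line-segment-det} and monotonicity, $Q_i\subseteq P_i$ forces $\mvol(Q_1,\dots,Q_n)\le\mvol(P_1,\dots,P_n)=0$, so equality is automatic; and by the dependence criterion for vanishing of mixed volumes, $\mvol(P_1,\dots,P_n)=0$ is equivalent to $P_1,\dots,P_n$ being dependent. This handles the equivalence of \eqref{item:prop1} with case (a) of \eqref{item:prop2}. So from now on assume $\mvol(P_1,\dots,P_n)>0$; the task is to show \eqref{item:prop1} is equivalent to condition \eqref{item:3b}.

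\emph{Step 2: Reduce to the difference of the two mixed volumes via a boundary formula.} The key tool is the classical identity: for any unit vector $w$,
\[
\mvol(P_1,\dots,P_n)\;=\;\frac{1}{n}\sum_{w}\val_{-w}\big(\text{support data}\big)\cdot \mvol\big(\init_w(P_1),\dots,\init_w(P_n)\big)_{w^\perp},
\]
where the sum is over the (finitely many) facet normals of the Minkowski sum $P_1+\dots+P_n$ and the inner mixed volume is computed in the hyperplane $w^\perp\cong\mathbb{R}^{n-1}$ — more precisely, one uses the decomposition of $\mvol(P_1,\dots,P_n)$ as a sum over facet directions $w$ of the Minkowski sum of $h_{P_i}(w)$-weighted $(n-1)$-dimensional mixed volumes of the exposed faces $\init_w(P_i)$. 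Subtracting the corresponding expression for the $Q_i$, and using $\init_w(Q_i)\subseteq\init_w(P_i)$ whenever $Q_i\cap\init_w(P_i)\neq\emptyset$ (and otherwise $\init_w(Q_i)$ contributes a face of $Q_i$ supported in a direction where $Q_i$ and $P_i$ "disagree"), I would argue term by term that $\mvol(P_1,\dots,P_n)=\mvol(Q_1,\dots,Q_n)$ forces each face-mixed-volume $\mvol(\init_w(Q_i):\ i\in S_w)$ to vanish, where $S_w=\{i: Q_i\cap\init_w(P_i)\neq\emptyset\}$. Conversely, if all those face mixed volumes vanish, one runs the same formula to conclude the two mixed volumes agree, proceeding by induction on the dimension $n$ (the base case $n=1$ being the trivial statement that a segment and a subsegment have equal length iff they coincide, reorganized appropriately). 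The vanishing of $\mvol(\init_w(Q_i):\ i\in S_w)$ in $\mathbb{R}^{n-1}$ is, again by the dependence criterion, exactly the statement that $\{\init_w(Q_i):\ i\in S_w\}$ is dependent — which is condition \eqref{item:3b}.

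\emph{Main obstacle.} The delicate point is Step 2: getting the bookkeeping right for which indices $i$ actually appear in the face collection for a given $w$, and making sure the "boundary formula" subtraction is genuinely term-by-term. One has to be careful that for directions $w$ with $Q_i\cap\init_w(P_i)=\emptyset$, the face $\init_w(Q_i)$ still lives somewhere in $P_i$ but not on the $w$-face of $P_i$, so that direction contributes to $\mvol(Q_\bullet)$ through a \emph{different} facet normal of $Q_1+\dots+Q_n$; reconciling the two facet-normal fans (that of $\sum P_i$ versus that of $\sum Q_i$) is the technical heart. I expect this to require either a common refinement of the two normal fans or an induction that peels off one polytope at a time using the multilinearity $\mvol(P_1+P_1',P_2,\dots)=\mvol(P_1,P_2,\dots)+\mvol(P_1',P_2,\dots)$ and the monotonicity lemma recalled before the proposition. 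Everything else — the vanishing-iff-dependent criterion, monotonicity, Example~\ref{ex:line-segment-det} — is available from the material already in the excerpt or is standard.
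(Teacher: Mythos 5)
The paper does not prove this proposition: it is quoted verbatim from \cite[Proposition 7.36]{mondal2018how} and used as a black box, so there is no in-paper argument to compare yours against. Your submission therefore has to stand on its own as a reconstruction of the cited result, and as written it is a strategy outline rather than a proof.

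There are concrete gaps. First, the displayed decomposition formula is dimensionally inconsistent: the classical facet formula reads
\[
\mvol(P_1,\dots,P_n)\;=\;\frac{1}{n}\sum_{u} h_{P_1}(u)\,\mvol_{n-1}\bigl(\init_u(P_2),\dots,\init_u(P_n)\bigr),
\]
with a single support-function factor and $n-1$ exposed faces living in $u^\perp$; your version puts all $n$ faces into an $(n-1)$-dimensional mixed volume, which is not defined. Second, and more seriously, the step you label the ``technical heart'' --- reconciling the normal fans of $\sum P_i$ and $\sum Q_i$ and showing the difference of the two decompositions vanishes term by term --- is precisely where the entire content of the proposition lives, and it is deferred (``I would argue,'' ``I expect this to require'') rather than carried out. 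Note in particular that the two decompositions are not indexed by the same set of directions and weight the terms by different support values $h_{P_1}(u)$ versus $h_{Q_1}(u)$, so there is no a priori term-by-term correspondence to subtract. Third, the index set $S_w=\{i: Q_i\cap\init_w(P_i)\neq\emptyset\}$ need not have $n-1$ elements, so ``the vanishing of $\mvol(\init_w(Q_i): i\in S_w)$'' is not defined as stated; passing between dependence of an arbitrary-size collection (in the sense of the definition preceding the proposition) and vanishing of some mixed volume requires a separate Rado/Hall-type argument that you do not supply. The opening reduction (equivalence of $\mvol=0$ with dependence, and monotonicity disposing of the degenerate case) is fine, but the remainder needs to be worked out before this can be accepted as a proof; alternatively, since the paper itself only cites the result, it would be legitimate to do the same.
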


\Cref{prop:guarantee_maximality} gives conditions under which is suffices to consider the  (potentially much simpler) polytopes $Q_i \subseteq P_i$ to compute the mixed volume of $P_1,\ldots,P_n$.

\begin{example}\label{ex:depPolytopes}
Consider the triangles $P_1 = P_2 = \conv( \{(0{,}0),(1{,}0),(0,1) \})$, 
and the line segments
\[Q_1 = \conv( \{(0{,}0), (1{,}0) \}) \subset P_1, \qquad Q_2 = \conv( \{(0{,}0),(0,1) \}) \subset P_2. \]
Direct computation shows
\[ \mvol(P_1,P_2) = 1 = \mvol(Q_1,Q_2). \]
We can also use \Cref{ex:line-segment-det} to prove $\mvol(Q_1,Q_2) = 1$ and \Cref{prop:guarantee_maximality} to prove $\mvol(P_1,P_2) = \mvol(Q_1,Q_2)$ since for any nonzero $w \in \mathbb{Z}^2$, the collection of polytopes $\{\init_w(Q_i) : Q_i \cap \init_w(P_i) \, \neq \, \emptyset \}$, $i = 1,2$, contains a single point so the collection is dependent.
This type of argument, where we use the dependence of polytopes and \Cref{prop:guarantee_maximality},
is also used in the proofs of \Cref{thm:degree_lambdas_unknown} and \Cref{thm:means_unknown_sigma_equal}.
\end{example}

The following lemma will be of use later to apply \Cref{prop:guarantee_maximality} in the proof of~\Cref{thm:degree_lambdas_unknown}.

\begin{lem}\label{lem:dependent_polytopes}
Let $P_i \subseteq \mathbb{R}^n$ and $Q_i = \conv(\{0_n,v_i\}) \subseteq P_i$ for $i \in [n]$ be convex polytopes. Consider the set, $W$, of nonzero $w \in \mathbb{R}^n$ such that $Q_i \cap \init_w(P_i) \neq \emptyset$ for all $i \in [n]$. I.e.,
\[W = \{w \in \mathbb{R}^n \backslash \{0_n\} : Q_i \cap \init_w(P_i) \neq \emptyset, \ \forall i \in [n] \}. \]
If $\{v_1,\ldots, v_n \}$ are linearly independent, then the collection of polytopes 
\[\{ \init_w(Q_1), \ldots,\init_w(Q_n) \}\] 
are dependent for all $w \in W$.
\end{lem}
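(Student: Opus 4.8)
The plan is to reduce the statement to an elementary observation about how a linear functional behaves on a line segment through the origin. First I would describe $\init_w(Q_i)$ explicitly. Writing $Q_i = \conv(\{0_n, v_i\}) = \{t v_i : t \in [0,1]\}$, the function $x \mapsto \langle w, x\rangle$ takes the value $t\,\langle w, v_i\rangle$ at $x = t v_i$, which is affine in $t$. Minimizing over $t \in [0,1]$ leaves three cases: if $\langle w, v_i\rangle > 0$ the minimum is attained only at $t = 0$, so $\init_w(Q_i) = \{0_n\}$; if $\langle w, v_i\rangle < 0$ the minimum is attained only at $t = 1$, so $\init_w(Q_i) = \{v_i\}$; and if $\langle w, v_i\rangle = 0$ then $\langle w, \cdot\rangle$ vanishes identically on $Q_i$, so $\init_w(Q_i) = Q_i$. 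The point I want to extract is that $\init_w(Q_i)$ is a single point as soon as $\langle w, v_i\rangle \neq 0$.

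Next I would use the linear independence hypothesis. Fix $w \in W$. Since $v_1, \ldots, v_n$ are $n$ linearly independent vectors in $\mathbb{R}^n$, they form a basis, so the nonzero linear functional $\langle w, \cdot\rangle$ cannot vanish on all of them; choose an index $j$ with $\langle w, v_j\rangle \neq 0$. By the previous paragraph $\init_w(Q_j)$ is then a single point, hence $\dim(\init_w(Q_j)) = 0$. Taking the singleton index set $\mathcal{I} = \{j\}$ in the definition of dependence, we have $\dim\bigl(\sum_{i \in \mathcal{I}} \init_w(Q_i)\bigr) = \dim(\init_w(Q_j)) = 0 < 1 = |\mathcal{I}|$, so $\{\init_w(Q_1), \ldots, \init_w(Q_n)\}$ is dependent. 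Since $w \in W$ was arbitrary, this is the claim.

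I do not expect a genuine obstacle here. The argument is just the case analysis of the first paragraph together with the spanning property of a basis, and in fact the membership $w \in W$ is never really used in the proof of dependence --- it is present only so that the conclusion refers to the full collection $\{\init_w(Q_1), \ldots, \init_w(Q_n)\}$, matching the collection of polytopes that appears in Proposition~\ref{prop:guarantee_maximality}. The one subtlety worth stating cleanly is that a segment through the origin collapses to a single vertex under $\init_w$ precisely when $w$ is not orthogonal to its direction vector; all remaining details are bookkeeping. The real work lies downstream, where this lemma lets us replace the Newton polytopes of the moment equations by the line segments $Q_i$ when invoking Proposition~\ref{prop:guarantee_maximality}.
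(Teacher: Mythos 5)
Your proof is correct and follows essentially the same route as the paper: both arguments rest on the observation that a nonzero $w$ cannot be orthogonal to all of the basis vectors $v_1,\ldots,v_n$, so some $\init_w(Q_j)$ collapses to a single point. The only (cosmetic) difference is that you witness dependence with the singleton $\mathcal{I}=\{j\}$, whereas the paper takes $\mathcal{I}=[n]$ and bounds $\dim\bigl(\sum_{i=1}^n \init_w(Q_i)\bigr)$ by the sum of the individual dimensions.
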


\section{Rational Identifiability for Gaussian mixture models}
In this section we make significant progress on an open problem regarding moment identifiability of Gaussian mixtures.

Any univariate Gaussian $k$-mixture model is uniquely identifiable from its first $4k-2$ moments $m_1,\ldots,m_{4k-2}$ \cite{kalai2012disentangling} and  
 algebraically identifiable from the moments $m_1,\ldots,m_{3k-1}$ \cite[Theorem 1]{amendola2018algebraic}.
 The number of moments needed for Gaussian $k$-mixture models to be  rationally identifiable is therefore somewhere between $3k-1$ and $4k-2$.

 We now show that indeed the correct asymptotic order of moments needed for rational identifiability grows like $3k$ and not like $4k$. The proof relies on concepts and recent results from algebraic geometry. We give a concise argument with detailed references.

\begin{rem}
For some historical background, 
note that when $k=2$, the two expressions, $3k$ and $4k-2$, match at six moments. The unique identifiability of a $2$-mixture from $m_1,\dots,m_6$ was implicitly assumed by Pearson in \cite{pearson1894contributions} and first shown to be true by Lazard in \cite{lazard2004injectivity}. 
Regarding rationally identifiable, 
 the number of moments needed for Gaussian $k$-mixtures 
 is conjectured to be precisely $3k$ \cite[Conjecture 3]{amendola2016moment}.
\end{rem}

Let $\mathbb{P}^d$ be the projective space of dimension 
 $d$ whose coordinates
are all $d+1$ moments $m_{i}$
with $i=0,1,\dots, d$.
The main geometric object, first introduced in \cite{amendola2016moment}, is the \emph{Gaussian moment surface} $\mathcal{G}_{1,d}$ in $\mathbb{P}^d$.
It is the projective closure of the image of the parameterization
$\CC^2\to\CC^d$, $(\mu,\sigma^2)\mapsto (m_1,\dots,m_d)$
where $m_i=M_i(\mu,\sigma^2)$ with 
the right-hand side being the bivariate polynomial from \eqref{momentdef1}. 
If we restrict the parameterization to $\mathbb{R}\times \mathbb{R}_{>0}$,
then we have the image consisting of all univariate Gaussian moments up to order $d$. 

The Gaussian moment surface $\mathcal{G}_{1,d}$ is an algebraic set in $\PP^d$, also called a projective variety. 
This means it is
the solution set to a system of homogeneous polynomials in $m_0,\dots,m_d$. 
As an example, for $d=3$,
\[ 
\mathcal{G}_{1,3} = \left\{
(m_0,m_1,m_2,m_3)\in \PP^3 \,:\, 
2\,m_{1}^{3}-3\,m_{0}m_{1}m_{2}+m_{0}^{2}m_{3}=0
\right\}
\]
 is a surface in $\PP^3$, 
 which is smooth everywhere except along the line defined by $m_0=m_1=0$. 
Moreover, by  \cite[Lemma 4]{amendola2018algebraic}, for any $d\geq 3$,
$\mathcal{G}_{1,d}$,
 is singular precisely on the line defined by 
\begin{equation}\label{eq:sing-locus-G1d}
    m_0=m_1=\dots=m_{d-2}=0.
\end{equation}
In addition, since  $\mathcal{G}_{1,d}$ is the closure of the image  of a parameterization by $\CC^2$, 
this projective variety is also irreducible, i.e., if any other closed algebraic sets $A,B$ in $\PP^d$  such that $A\cup B =\mathcal{G}_{1,d}$ 
then either $A$ or $B$ equals $\mathcal{G}_{1,d}$.
Now we are ready to prove a fact about the Gaussian moment surface $\mathcal{G}_{1,d}$.

\begin{lem}\label{lemma:non-degenerate-Gauss}
    The Gaussian moment surface $\mathcal{G}_{1,d}$
    has a non-degenerate Gauss map.
\end{lem}
\begin{proof}
 According to \cite[Section 2]{griffith1979algebraic} and \cite[Theorem 3.4.6]{ivery2003cartan}, if the Gauss map of a projective surface $\ZV$ is degenerate, then $\ZV$ must be one of the following:
\begin{enumerate}[leftmargin=*]
    \item\label{item:degen-linear} a linearly embedded $\mathbb{P}^2$;
    \item\label{item:degen-cone} a cone over a curve;
    \item\label{item:degen-tangential} the tangential variety to a curve.
\end{enumerate}
We are not in case \ref{item:degen-linear} because the degree of  $\mathcal{G}_{1,d}$ is $\binom{d}{2}>1$ for $d>2$  by   
\cite[Corollary 2]{amendola2016moment}. 
In other words $\mathcal{G}_{1,d}$ is a nonlinear surface that cannot be linearly embedded into $\mathbb{P}^2$. 
Case \ref{item:degen-cone} is not possible for  $\mathcal{G}_{1,d}$ either, by the proof of \cite[Theorem 1]{amendola2018algebraic} where great detail is given to exclude the cone over a curve scenario.

It remains to show we are not in  case \ref{item:degen-tangential}. 
If the surface $\mathcal{G}_{1,d}$ is a tangential variety to a curve, 
then either it will contain said curve in its singular locus or the curve will be contained in a plane. 
In the first instance, since the singular locus of $\mathcal{G}_{1,d}$ is a line \eqref{eq:sing-locus-G1d},
we must have $\mathcal{G}_{1,d}$ is the tangential variety to a line. 
This is a contradiction because the tangential variety of a line is the line itself and certainly not a surface. In the second instance, the tangential variety would be a plane, which is case~\ref{item:degen-linear}. 
\end{proof}

Secant varieties have a long history of appearing in applications and more recently in statistics.
For a projective variety $\ZV$,
by definition, the \emph{$k$th secant variety} $\sigma_k(\ZV)$ is the Zariski closure of the union of the $(k-1)$-planes spanned by collections of $k$ points in $\ZV$. 
\begin{example}
Let $\ZV$ be the variety of $m\times n$ rank one matrices in the projective space $\mathbb{P}^{mn-1}$ with coordinates $z_{ij}$ being the $(i,j)$ entry of the matrix.  
For $k\leq \min\{m,n\}$, the secant variety $\sigma_k(\ZV)$ consists of rank at most $k$ matrices. 
\end{example}

For our purposes, we are interested in the $k$th secant variety of the surface $\mathcal{G}_{1,d}$.
 Since the moment of a Gaussian $k$-mixture is a linear combination of the moments of each of the $k$ components, points in $\sigma_k(\mathcal{G}_{1,d})$ represent the moments up to order $d$ of a Gaussian $k$-mixture. 

For $\ZV\subset \mathbb{P}^N$
the dimension of a $k$th secant variety of $\ZV$ satisfies the inequality
 \begin{equation}\label{eq:defective-meaning}\dim(\sigma_k(\ZV))\leq 
 \min\left\{
    k\cdot \dim(\ZV)+k-1, N 
 \right\}.
\end{equation}
When the inequality is strict, we say that $\ZV$ is \emph{$k$-defective}.
In geometric terms, 
a variety $\ZV$ being $k$-defective means that
there are \emph{infinitely} many $(k-1)$-planes passing through any $k$ points in $\ZV$.
On the other hand,
 we say $\ZV$ is \emph{$k$-identifiable} 
if there is a \emph{unique} $(k-1)$-plane that passes through $k$ generic points in $\ZV$. 

In the instance of \eqref{eq:defective-meaning} 
when 
$\ZV= \mathcal{G}_{1,d}$.
By \cite[Theorem 1]{amendola2018algebraic},
 $\mathcal{G}_{1,d}$ is not $k$-defective, i.e., 
 \begin{equation}\label{eq:defective-G1d}\dim(\sigma_k\mathcal{(G}_{1,d})) =  
 \min\left\{
    3k-1, d 
 \right\}.
\end{equation}
From \eqref{eq:defective-G1d} it follows that Gaussian $k$-mixtures are algebraically identifiable from moments up to order $3k-1$.
Furthermore, if one has that
$\mathcal{G}_{1,d}$
is $k$-identifiable,  then we have {rational identifiability} from moments up to order $d$.
We prove this for $d=3k+2$.

\begin{thm}\label{thm:improv1}
Mixtures of $k$ univariate Gaussians are rationally identifiable from moments $m_1,\ldots,m_{3k+2}$.
\end{thm}
\begin{proof}
We wish to apply \cite[Theorem 1.5]{massarenti2022bronowskis}, which states that $k$-identifiability holds for irreducible and non-degenerate varieties $\ZV \subset \mathbb{P}^N$ of dimension $n$ if these  conditions hold:
\begin{enumerate}[leftmargin=*]
    \item\label{case:dim-bound} $(k+1)n + k \leq N$;
    \item\label{case:non-degenerate-Gauss} $\ZV$ has non-degenerate Gauss map;
    \item\label{case:not-defective} $\ZV$ is not $(k+1)$-defective.
\end{enumerate}
We consider $\ZV=\mathcal{G}_{1,3k+2}$ which is $n=2$-dimensional in $\mathbb{P}^{3k+2}$ so that $N=3k+2$. 
The first item holds with equality as 
$(k+1)2+k =  N$. 
Item \ref{case:non-degenerate-Gauss} is the content of \Cref{lemma:non-degenerate-Gauss}.
Thus, it remains to prove item ~\ref{case:not-defective}. This follows from
\eqref{eq:defective-meaning} and ~\eqref{eq:defective-G1d} by substituting $k+1$ for $k$ to~get
\[
\dim(\sigma_{k+1} ( \mathcal{G}_{1,3k+2} ) )
    =
\min\{ 3(k+1)-1,3k+2\}  =3k+2,  
\]
thereby showing 
$\mathcal{G}_{1,3k+2}$ is not $(k+1)$-defective.
Then, by \cite[Theorem 1.5]{massarenti2022bronowskis}, 
we have $\mathcal{G}_{1,3k+2}$ is 
$k$-identifiable.
\end{proof}

\section{Parameter recovery in one dimension}\label{sec:one_d_moment_varieties}
We now present our first set of results for parameter recovery in one dimension: efficiently finding all complex solutions stemming from the moment equations. 
As a consequence of each theorem, we show a model is algebraically identifiable from an appropriately chosen truncated moment map $\phi$.

In each subsection,
we use $\Theta$ to parameterize the model covered in that section. 
In each case $\phi$ (introduced in~\eqref{eq:truncated-moment-map}) is the truncated moment map using the first $\dim(\Theta)$ moments.
\subsection{Mixed volume of \texorpdfstring{$\bar \lambda$-}-weighted models}\label{sec:lambda-weighted}

First consider 
the $\overline \lambda$-weighted model with $k$ mixture components. 
We study the moment system 
\begin{equation}
\begin{aligned} \label{eq:known_lambda}
f_1^k (\boldsymbol{\mu},\boldsymbol{\sigma^2},\boldsymbol{\overline \lambda}) =0,\, \ldots, \,  f_{2k}^k(\boldsymbol{\mu},\boldsymbol{\sigma^2},\boldsymbol{\overline \lambda}) = 0,
\end{aligned}
\end{equation}
where
$\overline \lambda$ are known mixing coefficients,
and $f_i^k$, $i \in [2k]$ is as defined in \eqref{moment_polynomials}. 
In this set up the unknowns are $\mu_\ell, \sigma_\ell$, $\ell \in [k]$.

First, we record the following fact about the moment functions $M_k(\mu, \sigma^2)$ of a Gaussian.

\begin{lem}\label{lem:induction}
The partial derivatives of $M_k$ 
satisfy
\[ \frac{\partial }{\partial \mu} M_k( \mu, \sigma^2) = k \cdot M_{k-1}( \mu, \sigma^2) \quad \text{ and } \quad \frac{\partial }{\partial \sigma^2} M_k( \mu, \sigma^2) = \binom{k}{2} \cdot M_{k-2}( \mu, \sigma^2). \]

\end{lem}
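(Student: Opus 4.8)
The plan is to prove both identities simultaneously by induction on $k$, using the recursion \eqref{momentdef1} that defines the $M_k$. The base cases are immediate: since $M_0 = 1$, $M_1 = \mu$, and $M_2 = \mu^2 + \sigma^2$, one checks directly that $\frac{\partial}{\partial\mu}M_1 = 1 = 1\cdot M_0$, $\frac{\partial}{\partial\mu}M_2 = 2\mu = 2M_1$, $\frac{\partial}{\partial\sigma^2}M_2 = 1 = \binom{2}{2}M_0$, and (interpreting $M_{-1}$ as $0$) $\frac{\partial}{\partial\sigma^2}M_1 = 0 = \binom{1}{2}M_{-1}$. So the statement holds for small $k$.

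For the inductive step, I would fix $k \geq 3$, assume both formulas hold for all smaller indices, and differentiate the defining recursion $M_k = \mu M_{k-1} + (k-1)\sigma^2 M_{k-2}$. For the $\mu$-derivative, the product rule gives
\[
\frac{\partial}{\partial\mu}M_k = M_{k-1} + \mu\frac{\partial}{\partial\mu}M_{k-1} + (k-1)\sigma^2\frac{\partial}{\partial\mu}M_{k-2}
= M_{k-1} + \mu(k-1)M_{k-2} + (k-1)\sigma^2(k-2)M_{k-3},
\]
using the inductive hypothesis on the last two terms. Now I want this to equal $kM_{k-1}$, i.e. I need $(k-1)M_{k-1} = \mu(k-1)M_{k-2} + (k-1)(k-2)\sigma^2 M_{k-3}$, which after dividing by $k-1$ is exactly the recursion for $M_{k-1}$. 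So the $\mu$-identity follows. For the $\sigma^2$-derivative, differentiating the recursion gives
\[
\frac{\partial}{\partial\sigma^2}M_k = \mu\frac{\partial}{\partial\sigma^2}M_{k-1} + (k-1)M_{k-2} + (k-1)\sigma^2\frac{\partial}{\partial\sigma^2}M_{k-2}
= \mu\binom{k-1}{2}M_{k-3} + (k-1)M_{k-2} + (k-1)\sigma^2\binom{k-2}{2}M_{k-4},
\]
and I want this to equal $\binom{k}{2}M_{k-2}$. Writing $\binom{k-1}{2} = \binom{k}{2} - (k-1)$ and rearranging, the target reduces to $\binom{k-1}{2}M_{k-2} = \binom{k-1}{2}\big(\mu M_{k-3} + (k-3)\sigma^2 M_{k-4}\big)$ after checking the binomial bookkeeping $\binom{k-1}{2}(k-3) = (k-1)\binom{k-2}{2}$; the bracketed expression is again just the recursion for $M_{k-2}$.

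The only real subtlety — and the step I would be most careful about — is the binomial coefficient arithmetic in the $\sigma^2$ case, together with handling the low-index edge cases $M_{-1} = M_0$-boundary and $M_{k-3}, M_{k-4}$ when $k$ is $3$ or $4$, so that the recursion is genuinely applicable (one should treat $M_j = 0$ for $j < 0$ and verify the identities $\binom{k-1}{2} + (k-1) = \binom{k}{2}$ and $(k-1)\binom{k-2}{2} = (k-3)\binom{k-1}{2}$ hold as polynomial identities in $k$). Everything else is a routine application of the product rule and the inductive hypothesis, so no deeper idea is needed; the lemma is essentially a bookkeeping consequence of \eqref{momentdef1}.
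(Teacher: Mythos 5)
Your proof is correct and follows essentially the same route as the paper's: induct on $k$, differentiate the recursion \eqref{momentdef1}, apply the inductive hypothesis to the lower-order terms, and recognize the recursion for $M_{k-1}$ (resp.\ $M_{k-2}$) in the result, using the identities $\binom{k-1}{2}+(k-1)=\binom{k}{2}$ and $(k-1)\binom{k-2}{2}=(k-3)\binom{k-1}{2}$. Your extra care with the base cases and the convention $M_j=0$ for $j<0$ is a minor tidiness improvement over the paper's terser treatment, but the argument is the same.
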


Now we give our first algebraic identifiability result.

\begin{prop}\label{lem:algebraic_identifiability_lambdas_known}
Fix $\overline \lambda\in \Delta_{k-1}$
and let $\Theta$ parameterize 
the $\overline{\lambda}$-weighted  Gaussian $k$-mixture model.
For $(\overline{m}_1,\dots \overline{m}_{2k})\in \VV_\phi$, the set of complex solutions to 
 \eqref{eq:known_lambda} is finite and nonempty. 
\end{prop}

We now use \Cref{bkkbound} and \Cref{prop:guarantee_maximality} to give an upper bound on the number of complex solutions to \eqref{eq:known_lambda}. 
Recall that if $N$ is odd, the double factorial is defined as 
\[ N!! = 1 \cdot 3 \cdot 5 \cdots N. \]

\begin{thm}[Mixing coefficients known]\label{thm:degree_lambdas_unknown}
Fix $\overline \lambda\in \Delta_{k-1}$,
and let $\Theta$ parameterize 
the $\overline{\lambda}$-weighted  Gaussian $k$-mixture model.
For
$(\overline{m}_1,\dots \overline{m}_{2k})\in \VV_\phi$,
 the number of complex solutions to \eqref{eq:known_lambda} is 
 at most $(2k-1)!!\cdot k!$.
\end{thm}

\begin{rem}
An instance of the previous theorem is when the mixing weights are all equal. In this case, there are $(2k-1)!!$ solutions up to the standard label-swapping symmetry.
There exist techniques in numerical nonlinear algebra that exploit this symmetry for computational speed up. One such technique known as \emph{monodromy} was recently used for this problem with success~\cite[Section~4.1]{amendola2016solving}.
\end{rem}

\subsection{Mixed volume of \texorpdfstring{$\bar \lambda$-}-weighted homoscedastic models}\label{sec:lambda-weighted-homoscedastic}

We consider the $\bar\lambda-$weighted homoscedastic model. In this setting the means are unknown and the variances are unknown but all equal. In this case, a $k$-mixture model has $k+1$ unknowns.

We consider the moment system 
\begin{equation}
\begin{aligned} \label{eq:known_lambda_homoscedastic}
f_1^k (\boldsymbol{\mu},\boldsymbol{\sigma^2},\boldsymbol{\overline \lambda}) =0, \, \ldots , \, f_{k+1}^k(\boldsymbol{\mu},\boldsymbol{\sigma^2},\boldsymbol{\overline \lambda}) = 0,
\quad 
\sigma_1^2=\sigma_2^2=\cdots=\sigma_k^2
\end{aligned}
\end{equation}
where $\overline \lambda$ are known mixing coefficients, 
and $f_i^k$, $i \in [k+1]$ is as defined in \eqref{moment_polynomials}. In this set up, the $k+1$ unknowns are $\mu_\ell $, $\ell \in [k]$ and $\sigma^2$.

Again, the first step is to prove that this model is algebraically identifiable.

\begin{prop}\label{prop:lambda_known_sigma_equal_alg_ident}
Fix $\overline \lambda\in \Delta_{k-1}$,
and let $\Theta$ parameterize 
the $\overline{\lambda}$-weighted homoscedastic model.
For
{$(\overline{m}_1,\dots \overline{m}_{k+1})\in \VV_\phi$,
}
 the set of complex solutions to \eqref{eq:known_lambda_homoscedastic} is finite and nonempty. 
\end{prop}

We bound the number of solutions to \eqref{eq:known_lambda_homoscedastic} for a generic $\bar \lambda$-weighted homoscedastic mixture model using mixed volumes.

\begin{thm}[Mixing coefficients known, variances equal]\label{thm:means_unknown_sigma_equal}
Fix $\overline \lambda\in \Delta_{k-1}$,
and let $\Theta$ parameterize
{the $\overline{\lambda}$-weighted  homoscedastic Gaussian $k$-mixture model.}
For
$(\overline{m}_1,\dots \overline{m}_{k+1})\in \VV_\phi$,
 the number of complex solutions to \eqref{eq:known_lambda_homoscedastic} is 
 at most  $\frac{(k+1)!}{2}$.
\end{thm}

\subsection{Finding all solutions using homotopy continuation}\label{sec:finding_all_sols}
To do parameter recovery for any of the set ups described in
\Cref{sec:lambda-weighted} and \Cref{sec:lambda-weighted-homoscedastic}, it is not enough to know the number of complex solutions to the moment equations, we need a way to find all of them. Finding all complex solutions to a zero dimensional variety is a well-studied problem in numerical algebraic geometry. 
We outline the basics of homotopy continuation below but give \cite{bertini-book}
for a detailed reference.

Consider the system of polynomial equations
\begin{align*}
    F(\boldsymbol x) &= \big( f_1( \boldsymbol x), f_2(\boldsymbol x) ,\ldots, f_n(\boldsymbol x) \big) = 0
\end{align*}
where the number of complex solutions to $F(\boldsymbol x) = 0$ is finite and $\boldsymbol x = (x_1,\ldots,x_n)$.
The main idea is to construct a \emph{homotopy} \cite{li1997numerical}
\begin{equation}\label{eq:homotopy}
    H(\boldsymbol x;t) = \gamma(1-t)G(\boldsymbol x) + t F(\boldsymbol x)
\end{equation}
such that these three conditions hold:
\begin{enumerate}
    \item the solutions to $H(\boldsymbol x,0)=G(\boldsymbol x)=0$ are trivial to find, 
    \item\label{item:nosingularities} there are no singularities along the path $t \in [0,1)$, and
    \item all isolated solutions of $H(\boldsymbol x,1)=F(\boldsymbol x) = 0$ can be reached.
    \end{enumerate}
 Using a random $\gamma \in \mathbb{C}$ ensures that 
 there are no singularities along the path $t \in [0,1)$ so condition
 two is met. 
 Continuation methods known as \emph{predictor-corrector methods} are used to track the solutions from $G(\boldsymbol x) = 0$ to $F(\boldsymbol x)=0$ as $t$ varies from $0$ to $1$ \cite{butcher2003numerical}.

It is standard terminology to call $G(\boldsymbol x)=0$  the \emph{start system}. There are many choices for constructing a start system. 
A \emph{total degree} start system is of the form
\[
    G(\boldsymbol x) = \{x_1^{d_1}-1,\ldots, x_n^{d_n}-1 \} =0,
\]
where $d_i$ is the degree of $f_i$. The number of solutions to $G(\boldsymbol x)=0$ is $d_1\cdots d_n$, which is the B\'ezout bound of $G(\boldsymbol x)$. Using this start system, one must track $d_1\cdots d_n$ paths in order to find all solutions to $F(\boldsymbol x) = 0$. If $F(\boldsymbol x)=0$ has close to $d_1\cdots d_n$ solutions this is a reasonable choice.

If $F(\boldsymbol x)=0$ is a polynomial system with fewer solutions than the B\'ezout bound, then tracking $d_1\cdots d_n$ paths is unnecessary. 
Instead, one can leverage the system's Newton polytopes to construct a start system that has as many solutions as the mixed volume.  One example of this is the polyhedral homotopy method  \cite{huber1995a}
which tracks $N$ paths where $N$ is the BKK bound described in \Cref{bkkbound}.
This method constructs several binomial start systems, that is, systems formed by polynomials with precisely two terms.
A start system which tracks $N$ paths where $N$ is the BKK bound is called \emph{BKK optimal}.
The main drawback to polyhedral methods is that there is often a computational bottleneck associated with computing the start system. 
Our related approach circumvents this bottleneck and relies on the following lemma. 

The collection of Newton polytopes of a polynomial system $F = (f_1,\ldots,f_n)$ is denoted by $\newt(F) = (\newt(f_1),\ldots, \newt(f_n))$. 

\begin{lem}\label{lemma:Binomial-BBK-homotopy}
Suppose $G(\boldsymbol x) =0$ is a general sparse binomial system 
such that we have $\mvol(\newt(G))=\mvol(\newt(F))$ and 
$\newt(G)\subseteq\newt(F)$ coordinate-wise. 
If the origin is in each Newton polytope of $G$, 
then the three items above hold for the homotopy 
\eqref{eq:homotopy}.
\end{lem}

The fact that the total degree homotopy works is a special case of the previous lemma applied to polynomials with full monomial support. Combining \Cref{lemma:Binomial-BBK-homotopy} with \Cref{thm:degree_lambdas_unknown} and \Cref{thm:means_unknown_sigma_equal} we get the following corollary.

\begin{cor}\label{thm:optimal_start_system}
The binomial system induced by the polytopes $Q_i$ in the proofs of \Cref{thm:degree_lambdas_unknown} 
and \Cref{thm:means_unknown_sigma_equal} constructs a BKK optimal homotopy continuation start system for the corresponding moment system. 
\end{cor}
\begin{proof}
In this proof we construct the homotopy; give its start points; and show that it is optimal. We only show the details for the case of \Cref{thm:degree_lambdas_unknown} because the other statement's proof is analogous. 

Consider the binomial system 
\begin{equation*}
\begin{aligned}
    g_{2\ell-1} &= a_\ell \mu_{\ell}^{2\ell-1} + b_\ell, \quad \ell\in [k] \\
    g_{2\ell} &= c_{\ell } (\sigma_{\ell}^2)^{\ell} + d_{\ell},\quad \ell\in [k]
    \end{aligned}
\end{equation*}
where $a_\ell,b_\ell, c_\ell, d_\ell \in \mathbb{C}^*$ are generic for  $\ell\in [k]$.

Since each  $g_{2\ell-1}$ and $g_{2\ell}$ is a univariate polynomial in distinct variables,  multiplying the degrees shows that there are $(2k-1)!!k!$ solutions. 
This number agrees with the mixed volume of the respective moment system by \Cref{thm:degree_lambdas_unknown}. 
Moreover, the solutions are the start points of the homotopy 
\begin{equation}\label{eq:homotopy_from_cor}
H(\boldsymbol \mu,\boldsymbol \sigma^2;t) :=\begin{cases}
    (1-t)\gamma g_1+tf_1^k = 0\\
    \vdots\\
    (1-t)\gamma g_{2k}+tf_{2k}^k = 0.
\end{cases}
\end{equation}
where $f_i^k$ are defined as in \eqref{eq:known_lambda}. By \Cref{lemma:Binomial-BBK-homotopy} the result follows. 
\end{proof}

The homotopy in \eqref{eq:homotopy_from_cor} is an example of a polyhedral homotopy~\cite{huber1995a}. Polyhedral homotopies are constructed by multiplying each term of the system by $t$ to some power. When $t=1$ we have our original system, while at the limit $t\to0$ the system decomposes to multiple binomial systems. Determining the binomial systems at the limit is a computational bottleneck for the polyhedral homotopy method.  
\Cref{thm:optimal_start_system} bypasses the computational bottleneck associated with polyhedral homotopy methods. 
Therefore, the proof of each theorem gives the number of complex solutions to the corresponding variety \emph{and} provides an optimal start system that is binomial.
In contrast, the polyhedral homotopy solver in \cite{HomotopyContinuationjulia} failed to solve \eqref{eq:known_lambda} when $k = 6$ because it could not find a start system. On the other hand, we can immediately read off a binomial system using \Cref{thm:degree_lambdas_unknown}.

\begin{example}\label{ex:start_system}
Consider \eqref{eq:known_lambda} when $k = 2$ and $\overline \lambda = (1\slash 2, 1 \slash 2)$. Here we have
\begin{align*}
    f_1^2(\boldsymbol \mu,\boldsymbol \sigma^2,\boldsymbol{\overline \lambda} ) &= \frac{1}{2} \mu_1 + \frac{1}{2} \mu_2 - \overline{m}_1 \\
    f_2^2(\boldsymbol \mu,\boldsymbol \sigma^2,\boldsymbol{\overline \lambda}) &= \frac{1}{2}(\mu_1^2 + \sigma_1^2) + \frac{1}{2}(\mu_2^2 + \sigma_2^2) - \overline{m}_2\\
    f_3^2(\boldsymbol \mu,\boldsymbol \sigma^2,\boldsymbol{\overline \lambda} ) &= \frac{1}{2}(\mu_1^3 + 3 \mu_1 \sigma_1^2) + \frac{1}{2}(\mu_2^3 + 3 \mu_2 \sigma_2^2) - \overline{m}_3\\
    f_4^2(\boldsymbol \mu,\boldsymbol \sigma^2,\boldsymbol{\overline \lambda} ) &= \frac{1}{2}(\mu_1^4 + 6 \mu_1^2 \sigma_1^2 + 3 \sigma_1^4) + \frac{1}{2}(\mu_2^4 + 6 \mu_2^2\sigma_2^2 + 3 \sigma_2^4) - \overline{m}_4
\end{align*}
In this case we consider the start system:
\begin{align*}
    g_1 &=  \mu_1 - 10,  &&g_2 =  \sigma_1^2 - 12, \\
    g_3 &= \mu_2^3 - 27,  &&g_4 = \sigma_2^4 -4.
\end{align*}
This gives six start solutions of the form $(\mu_1,\sigma_1^2, \mu_2, \sigma_2^2)$:

\begin{align*}
    &(10, 12, \eta \cdot 3, 2 ), &&(10, 12,\eta^2 \cdot 3, 2 ), &&(10, 12, 3,2 ) \\
      &(10, 12, \eta \cdot 3, -2 ), &&(10, 12, \eta^2 \cdot 3, -2 ),  &&(10, 12, 3, -2 ) \\
\end{align*}
where $\eta$ is a primitive third root of unity.
We chose integers as the coefficients for ease of exposition. In practice, random complex numbers with norm close to one are used as the coefficients. 
\end{example}

For the $\lambda$-weighted model discussed in \Cref{sec:lambda-weighted}, the B\'ezout bound of \eqref{eq:known_lambda} is $(2k)!$ but \Cref{thm:degree_lambdas_unknown} showed that the mixed volume is $(2k-1)!!k!$. The limit of
\[  \frac{(2k)!}{(2k-1)!!k!} \]
tends to infinity as $k \to \infty$, showing that for large $k$, our start system is significantly better than a total degree one. 

\begin{rem}
For the $\lambda-$weighted homoscedastic model discussed in \Cref{sec:lambda-weighted-homoscedastic}, \Cref{thm:means_unknown_sigma_equal} shows that there are at most $\frac{(k+1)!}{2}$ solutions even though the B\'ezout bound is $(k+1)!$. In this case using our start system, one would track half as many paths as a total degree start system would. 
\end{rem}

\subsection{Means unknown}\label{sec:means_unknown}

A final case of interest is the $\lambda$-weighted, known variance model. This is where only the means are unknown. 
This set up was considered in high dimensions in \cite{chen2020likelihood, daskalakis2017ten, xu2016global}.

We consider the moment system 
\begin{equation}
\begin{aligned} \label{eq:unknown_mean}
f_1^k (\boldsymbol \mu,\boldsymbol{\overline \sigma^2},\boldsymbol{\overline \lambda}) = \cdots = f_{k}^k(\boldsymbol \mu,\boldsymbol{\overline \sigma^2},\boldsymbol{\overline \lambda}) = 0
\end{aligned}
\end{equation}
where 
$\overline \lambda$ are known mixing coefficients, $\overline{\sigma_\ell}^2$ is a known variance,
and $f_\ell^k$, $\ell \in [k]$ is as defined in \eqref{moment_polynomials}. In this set up, the unknowns are $\mu_\ell $, $\ell \in [k]$.

\begin{thm}[Means unknown]\label{thm:degree_mus_unknown}
Fix $\overline \lambda\in \Delta_{k-1}$,
and let $\Theta$ parameterize the $\overline{\lambda}$-weighted  known variance Gaussian $k$-mixture model.
For
$(\overline{m}_1,\dots \overline{m}_{k})\in \VV_\phi$,
 the number of complex solutions to \eqref{eq:unknown_mean} is 
 at most  $k!$ where the inequality is tight.
\end{thm}

\begin{cor}[Equal weights and variances]\label{cor:identifiability_means_unknown}
A Gaussian $k-$mixture model with uniform mixing coefficients and known and equal variances is identifiable up to label-swapping symmetry using moments $(m_1,\ldots,m_k)$ {$\in \VV_\phi$}. 
\end{cor}

As a consequence of \Cref{thm:degree_mus_unknown}, when only the means are unknown the B\'ezout bound is equal to the BKK bound. In this case using polyhedral homotopy gives no advantage to total degree. 

As discussed in \Cref{cor:identifiability_means_unknown}, when the mixture weights and variances are known and equal the standard label-swapping symmetry observed with mixture models gives only one solution up to symmetry. Tracking a single path from a total degree start system, this one solution is easy to find. 

\begin{example}
When $k=2$, $\lambda_1 = \lambda_2 = \frac{1}{2}$ and $\sigma_1^2 = \sigma_2^2 = \sigma^2$ is a known parameter, we can symbolically solve the corresponding moment system and see that up to symmetry
\[\mu_1 = \overline{m}_1 - \sqrt{-\overline{m}_1^2 + \overline{m}_2 - \sigma^2}, \qquad \mu_2 = \overline{m}_1 + \sqrt{-\overline{m}_1^2 + \overline{m}_2 - \sigma^2}. \]
This shows that so long as $-\overline{m}_1^2 + \overline{m}_2 - \sigma^2>0$ we are guaranteed to get something statistically meaningful. A picture of that region in $\mathbb{R}^3$ is shown in  \Cref{fig:stat_sig_Cell}.

\begin{figure}[htbp!]
    \centering
    \includegraphics[width = 0.3\textwidth]{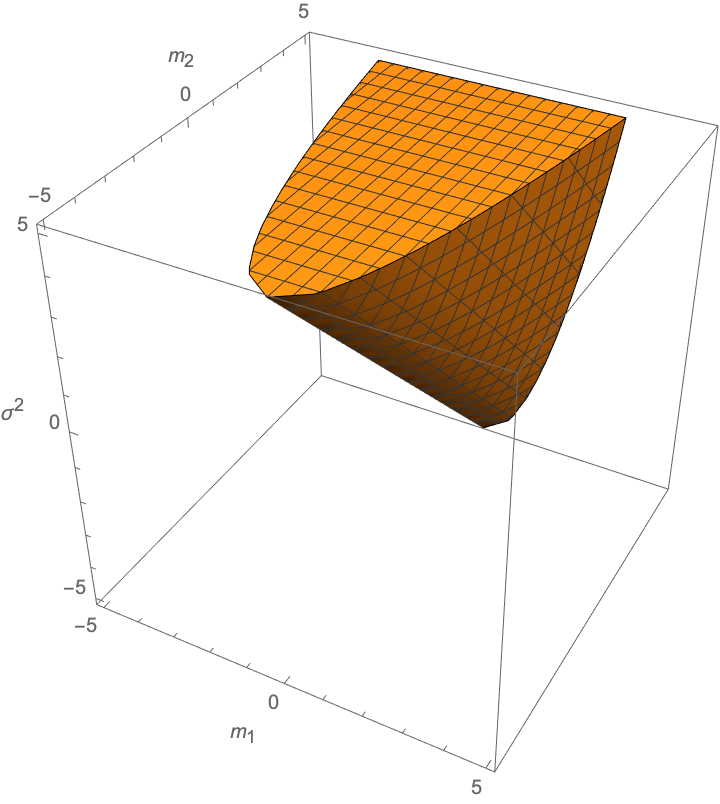}
    \caption{Region in the space of parameters $\overline{m}_1,\overline{m}_2,\sigma^2$ where there are statistically meaningful solutions for $k=2$ mixture model with unknown means and $\lambda_1 = \lambda_2 = \frac{1}{2}$.} 
    \label{fig:stat_sig_Cell}
\end{figure}
\end{example}

\section{Parameter recovery for high dimensional Gaussian mixture models}\label{sec:highdimstatistics}

\Cref{sec:one_d_moment_varieties} gives upper bounds on the number of solutions to the moment equations for Gaussian mixture models where some parameters of the model are assumed known. Using homotopy continuation algorithms 
we can efficiently perform density estimation in these cases. 
We now use our results to do density estimation on Gaussian mixture models in high dimensions.

\subsection{High-dimensional Gaussian mixture models}

To set notation, recall a random variable $X \in \mathbb{R}^n$ is distributed as a \emph{multivariate Gaussian} with mean $\mu \in \mathbb{R}^n$ and symmetric positive definite covariance matrix
$\Sigma \in \mathbb{R}^{n \times n}$, 
if it has density
\begin{align*}
    f_X(x_1,\ldots,x_n | \mu, \Sigma ) &= ((2 \pi)^n \det(\Sigma) )^{- 1 \slash 2} \exp \left( - \frac{1}{2}(x - \mu )^T  \Sigma^{-1} (x - \mu) \right).
\end{align*}
We denote $X \sim \mathcal{N}(\mu, \Sigma)$.

A random variable is distributed as the \emph{mixture of $k$ multivariate Gaussians} if it 
is the convex combination of $k$ multivariate Gaussian densities. It has probability density function
\begin{align*}
    f_X(x_1,\ldots,x_n | \lambda_\ell, \mu_\ell, \Sigma_\ell)_{\ell = 1,\ldots, k} &=
    \sum_{\ell=1}^k \lambda_\ell 
        f_{X_\ell}(x_1,\ldots,x_n \vert \mu_\ell, \Sigma_\ell )
 \end{align*}
where $(\lambda_1,\ldots,\lambda_k) \in \Delta_{k-1}$, $\mu_\ell \in \mathbb{R}^n$, and $\Sigma_\ell \in \mathbb{R}^{n \times n}$ is symmetric and positive definite for $\ell \in [k]$. Here we write $X \sim \sum_{\ell=1}^k \lambda_\ell\mathcal{N}(\mu_\ell, \Sigma_\ell).$

Let $f_X: \mathbb{R}^n \to \mathbb{R}$ be the probability density function of a random vector $X = (X_1,\ldots, X_n)$. 
We denote,
the \emph{$(i_1,\ldots, i_n)-$th moment} of $X$ by 
\begin{align*}
    m_{i_1,\ldots,i_n} &= \mathbb{E}[X_1^{i_1}\cdots X_n^{i_n}] = \int_{\mathbb{R}^n} x_1^{i_1} \cdots x_{n}^{i_n} f_X(x_1,\ldots,x_n) d x_1 \cdots d x_n
\end{align*}
where $i_s \geq 0$ for all $s \in [n]$ and the non-negative integer $i_1 + \ldots + i_n = d$ is the \emph{order} of $m_{i_1,\ldots,i_n}$.

We express $m_{i_1,\ldots,i_n}$ as a polynomial in the parameters
of a Gaussian mixture using the moment generating function. 
These are given by the identity:
\begin{align}
    \sum_{i_1,\ldots,i_n \geq 0} \frac{m_{i_1,\ldots, i_n}}{i_1! \cdots i_n!} t_1^{i_1} \cdots t_n^{i_n} &= \sum_{\ell = 1}^k \lambda_\ell \exp ( t_1 \mu_{\ell 1} + \ldots t_n \mu_{\ell n}) \cdot \exp( \frac{1}{2} \sum_{i,j = 1}^n \sigma_{\ell i j} t_i t_j). \label{moment_gen_GMM}
\end{align}

Using Taylor's formula, we can expand the left 
side of \eqref{moment_gen_GMM} and equate coefficients of each side to get $m_{i_1,\ldots,i_n}$. 
As in the univariate case, 
\begin{equation}\label{eq:m-to-M}
m_{i_1,\ldots,i_n} = \lambda_1 M_{i_1,\ldots,i_n}(\mu_1,\Sigma_1) + \ldots + \lambda_kM_{i_1,\ldots,i_n}(\mu_k,\Sigma_k)      
\end{equation}
where $M_{i_1,\ldots,i_n}(\mu,\Sigma)$ denotes the $(i_1,\ldots,i_n)-$th moment of a multivariate Gaussian random variable $X \sim \mathcal{N}(\mu,\Sigma)$.

The $m_{0{,}0,\ldots,0,i_s,0,\ldots 0}-$th moment is the same as the $i_s-$th order moment for the univariate Gaussian mixture model $\sum_{\ell =1}^k \lambda_\ell \mathcal{N}(\mu_{\ell s}, \sigma_{\ell s s}) $. 
This observation is key 
to our proposed density estimation algorithm in \Cref{sec:algorithm} and it follows from the property that marginal distributions of a Gaussian are Gaussian themselves.

\subsection{Dimension reduction and parameter recovery algorithm}\label{sec:algorithm}

We propose an algorithm for {parameter} estimation of multivariate Gaussian densities using the method of moments. The main idea is that if we use higher order moment equations, parameter estimation for multivariate Gaussian mixture models reduces to multiple instances of density estimation for univariate Gaussian mixture models. 
In this subsection we present
\Cref{algorithm:multivariate_density_estimation} to describe how to recover the parameters of a $\overline \lambda$-weighted multivariate Gaussian.

\begin{algorithm}[hbt!]
\DontPrintSemicolon 
\KwIn{
Mixing coefficients $\overline  \lambda_\ell$, $\ell \in [k]$ and moments: 
\begin{align*}
\overline{\mathfrak{m}}_{1,i} &:= \left\{\overline{m}_{e_i},\ldots, \overline{m}_{De_i}  \right\} \quad i\in [n]  \quad \text{ and }\quad D=3k+2.
\\
\overline{\mathfrak{m}}_2 &:= \{\overline{m}_{e_i + e_j}, \overline{m}_{2 e_i + e_j},\ldots,\overline{m}_{k e_i + e_j} \ :\  i, j \in [n], \ i \neq j  \}
\end{align*}
from a multivariate Gaussian mixture model:
\[ \bar \lambda_1 \mathcal{N}(\mu_1, \Sigma_1) + \cdots + \bar \lambda_k \mathcal{N}(\mu_k, \Sigma_k). \]
}
\KwOut{
    Parameters 
    $\mu_\ell  \in \mathbb{R}^n$, $\Sigma_\ell \succ 0$ for $\ell \in [k]$ such that $\overline{\mathfrak{m}}_{1,i}, \overline{\mathfrak{m}}_2$ are the moments of
    $\sum_{\ell=1}^k \bar \lambda_\ell \mathcal{N}(\mu_\ell, \Sigma_\ell)$. 
}
  \For{$i = 1,\dots,n$\label{alg:step-for-loop}}{
    With mixing coefficients $\bar \lambda_\ell$ and
  moments
    $\overline{\mathfrak{m}}_{1,i}$,
    find all solutions to \eqref{eq:known_lambda} with $ \mu_{\ell i} \in \mathbb{R}$ and $ \sigma_{\ell i i } > 0$ for all $\ell \in [k]$. 
    \label{alg:step3}
    \;
    Select the solution consistent with the moments up to $\overline{m}_{De_i}$.
    \label{alg:step-select} \;
}
    Form a linear system of $k \binom{n}{2}$ equations in 
    $k \binom{n}{2}$ unknowns
    by substituting $\overline{\lambda}_{\ell}, \mu_{\ell i}$ and $\sigma_{\ell i i}$ into $m_{t e_i + e_j}=\overline{m}_{t e_i + e_j}$ for $i,j \in [n]$, $t \in [k]$.\label{alg-step:form-system}
    \;      
    Solve the linear system for $\sigma_{\ell ij}, i\neq j$. \label{alg:linear}\;    
        {Return $ \mu_\ell, \Sigma_\ell$, $\ell\in [k]$.}
        \label{alg:step6} 
        
\caption{
Parameter recovery
for Mixtures of $\bar \lambda$-Weighted Multivariate Gaussians}\label{algorithm:multivariate_density_estimation}
\end{algorithm}

   \begin{rem}\label{rem:linear-system}
In Step~\ref{alg:linear}
of \Cref{algorithm:multivariate_density_estimation}
we solve a linear system. The fact that this is a linear system relies on the choice of moments in $\mathfrak{\overline m}_2$.
If we use a different set of moments, the system could become nonlinear or fail to uniquely recover the off diagonal entries of $\Sigma_\ell$. 

 \end{rem}

\begin{example}\label{ex:two-mix}
Suppose $X \sim 0.25 \cdot  \mathcal{N}(\mu_1, \Sigma_1) + 0.75\cdot \mathcal{N}(\mu_2, \Sigma_2)$.

The first step of \Cref{algorithm:multivariate_density_estimation}
use the moments $\overline{\mathfrak{m}}_{1,1}$. 
Here we use the input

\begin{align*}
  \overline{\mathfrak{m}}_{1,1} &= [\overline m_{10}, \dots, \overline m_{80}] = [ -0.25,\  2.75,\ -1.0,\  22.75,\  -6.5, \ 322.75, \ -58, \ 6569.75] \\
  \overline{\mathfrak{m}}_{1,2} &= [\overline m_{01},\dots,\overline m_{08}] = [2.5, \ 16.1, \ 74.5, \ 490.6, \ 2921.3, \ 20595.3, \ 142354, \ 1080135.4 ]. 
\end{align*}

In Step~\ref{alg:step3},
we solve \eqref{eq:known_lambda} with $\overline{\mathfrak{m}}_{1,1}$ for the unknowns $\mu_{11},\mu_{21},\sigma_{111},\sigma_{211}$:
\begin{align*}
        -0.25 &= 0.25 \cdot \mu_{11} + 0.75 \cdot \mu_{21} \\ 
           2.75 &= 0.25 \cdot(\mu_{11}^2 + \sigma_{111}) + 0.75 \cdot(\mu_{21}^2 + \sigma_{211})\\
          -1 &= 0.25 \cdot(\mu_{11}^3 + 3 \mu_{11} \sigma_{111}) + 0.75 \cdot (\mu_{21}^3 + 3 \mu_{21} \sigma_{211})\\
           22.75 &= 0.25 \cdot ( \mu_{11}^4 + 6 \mu_{11}^2 \sigma_{111} + 3 \sigma_{111}^2) + 0.75 \cdot ( \mu_{21}^4 + 6 \mu_{21}^2 \sigma_{211} + 3 \sigma_{211}^2).
\end{align*}
We find that there are two statistically meaningful solutions:
\begin{align*}
            (\mu_{11}, \mu_{21}, \sigma_{111},\sigma_{211}) &= (-1, 0,\ 1,\ 3) \\
            (\mu_{11}, \mu_{21}, \sigma_{111},\sigma_{211}) &= (0.978,\  -0.659,\ 2.690,\ 2.017).
\end{align*}
The first solutions has moments agreeing with the input $\overline{\mathfrak{m}}_{1,1}$,
while the second takes these values 
$[-1.5, 316.6, 168.7, 6243.9]$. Therefore, we select the first solution.
Proceeding with the second iteration of the for loop in Step~\ref{alg:step-for-loop}, we solve for the unknowns $\mu_{12},\mu_{22},\sigma_{122},\sigma_{222}$ 
using the sample moments $\overline{\mathfrak{m}}_{1,2}$ to obtain
    \begin{align*}
        (\mu_{12}, \mu_{22}, \sigma_{122}, \sigma_{222}) &= (-2, \ 4, \ 2, \ 3.5 \ ).
    \end{align*}

Finally, to recover the off diagonal entries of $\Sigma_1,\Sigma_2$, with sample moments 
        \[ \overline{\mathfrak{m}}_2 =  [\overline m_{11},\overline  m_{21}] = [0.8125, \ 7.75], \]
we   solve the linear system 
\begin{align*}            0.8125 &= 0.25 \cdot (2 + \sigma_{112}) + 0.75 \cdot \sigma_{212} \\
            7.75 &= 0.25 \cdot(-4 - 2 \cdot \sigma_{112}) + 9
        \end{align*}
to find $(\sigma_{112}, \sigma_{212}) = (0.5, \ 0.25)$. We estimate that our density is
    \begin{align*}
        0.25 \cdot \mathcal{N}\Big( \begin{bmatrix} -1 \\ -2 \end{bmatrix}, \begin{bmatrix} 1 & 0.5 \\ 0.5 & 2 \end{bmatrix} \Big) + 0.75 \cdot \mathcal{N} \Big( \begin{bmatrix} 0 \\ 4 \end{bmatrix}, \begin{bmatrix} 3 & 0.25 \\ 0.25 & 3.5 \end{bmatrix} \Big).
    \end{align*}
\end{example}

In \Cref{ex:two-mix}, we use the fact that the moment equations are block triangular. This block triangular structure appears in the higher dimensional setting too. To solve the moment equations, we are really solving $n+1$  subsystems.  
We solve $n$ systems in $2k$ unknowns and one linear system in 
$k\cdot\binom{n}{2}$
unknowns.
This is the reason that our algorithm scales well in the dimension.

Using the homotopy continuation methods discussed in \Cref{sec:finding_all_sols}, we can solve Step~\ref{alg:step3} of \Cref{algorithm:multivariate_density_estimation} 
using homotopy continuation with a binomial start system that is BKK optimal. 
Using these continuation methods for fixed $k$, the number of homotopy paths we need to track is $\mathcal{O}(n)$. 
One observation 
for implementation is that 
Step~\ref{alg:step3} 
can be performed in parallel.

\begin{thm}\label{thm:proof_of_algorithm}
For a generic $\overline \lambda$-weighted Gaussian $k$-mixture model in $\mathbb{R}^n$,
\[ \bar \lambda_1 \mathcal{N}(\mu_1,\Sigma_1) + \cdots + \bar \lambda_k \mathcal{N}(\mu_k, \Sigma_k), \]
  \Cref{algorithm:multivariate_density_estimation} is guaranteed to recover the parameters 
 $\mu_\ell$ and $\Sigma_\ell$, $\ell\in [k]$.
\end{thm}

   Models where all parameters are unknown, including the mixing coefficients, are the most general type of Gaussian mixture model. 
 \Cref{algorithm:multivariate_density_estimation} is easily adaptable for this situation. If the mixing coefficients are not known, then using the block triangular structure highlighted above, one can solve for the mixing coefficients in the first step. 
 The rest of the algorithm then reduces to the case where the mixing coefficients are known. In our experimental results in \Cref{ss: numerical simulation} we take this approach and do not assume that the mixing coefficients are known.
 This is summarized in \Cref{alg:unknown-mix}.

\begin{algorithm}[hbt!]
\DontPrintSemicolon 
\KwIn{
Moments: 
\begin{align*}
\overline{\mathfrak{m}}_{1,i} &:= \left\{\overline{m}_{e_i},\ldots, \overline{m}_{De_i}  \right\} \quad i\in [n]  \quad \text{ and }\quad D=3k+2.
\\
\overline{\mathfrak{m}}_2 &:= \{\overline{m}_{e_i + e_j}, \overline{m}_{2 e_i + e_j},\ldots,\overline{m}_{k e_i + e_j} \ :\  i, j \in [n], \ i \neq j  \}
\end{align*}
from a multivariate Gaussian mixture model:
\[  \lambda_1 \mathcal{N}(\mu_1, \Sigma_1) + \cdots +  \lambda_k \mathcal{N}(\mu_k, \Sigma_k). \]
}
\KwOut{
    Parameters $(\lambda_1,\dots,\lambda_k)\in\Delta_{k-1}$ and
    $\mu_\ell  \in \mathbb{R}^n$, $\Sigma_\ell \succ 0$ for $\ell \in [k]$ such that $\overline{\mathfrak{m}}_{1,i}, \overline{\mathfrak{m}}_2$ are the moments of
    $\sum_{\ell=1}^k  \lambda_\ell \mathcal{N}(\mu_\ell, \Sigma_\ell)$. 
}
{For the
  moments $\overline{\mathfrak{m}}_{1,1}$,
    find all solutions to \eqref{moment_polynomials} with 
    $\lambda\in\Delta_{k-1}$,
    $ \mu_{\ell 1} \in \mathbb{R}$ and $ \sigma_{\ell 1 1 } > 0$ for all $\ell \in [k]$. 
    \;
    Select the solution $(\lambda, \mu, \sigma^2) \in \Delta_{k-1} \times \mathbb{R}^k \times \mathbb{R}_{>0}^k$ consistent with the moments up to $\overline{m}_{De_1}$. \;
}
    {Run \Cref{algorithm:multivariate_density_estimation} with $ \lambda$ as the mixing coefficients.}  
     
     {Return $\lambda$ and $\mu_\ell, \Sigma_\ell$, $\ell\in [k]$.} 
        
\caption{
  Parameter recovery
for mixtures of multivariate Gaussians}\label{alg:unknown-mix}

\end{algorithm}

\begin{rem}
Observe in the proof of \Cref{thm:proof_of_algorithm} the linear system solved in Step~\ref{alg:linear} can be decomposed into $\frac{n(n-1)}{2}$ linear systems of size $k \times k$. Therefore, not only can the for loop in \Cref{algorithm:multivariate_density_estimation} be parallelized, but solving for the off diagonal entries of $\Sigma_1,\ldots,\Sigma_k$ can be as well.
\end{rem}

\begin{rem}
Although \Cref{algorithm:multivariate_density_estimation} uses higher order moments, this order does not exceed what we would need for the univariate case. In addition, it has been shown that there exist algebraic dependencies among lower order moment equations. This causes a failure of algebraic identifiability, thereby complicating the choice of which moment system to consider \cite{amendola2018algebraic}.
\end{rem}

Observe that \Cref{algorithm:multivariate_density_estimation} can be specialized to both the $\lambda$-weighted homoscedastic and $\lambda$-weighted known variance models by applying the results of \Cref{thm:means_unknown_sigma_equal} and \Cref{thm:degree_mus_unknown} to the framework outlined in \Cref{sec:algorithm}. In each case, the number of homotopy paths tracked will be $\frac{(k+1)!}{2}+(n-1)k!$ and $nk!$ respectively. We note again that both situations give algorithms where the number of homotopy paths scales linearly in $n$.

\subsection{Uniform mixtures with equal variances}
In this section, we consider the special case of 
estimating the parameters of a mixture of $k$ Gaussians in $\mathbb{R}^n$ where all of the means $\mu_i \in \mathbb{R}^n$ are unknown, each $\lambda_i = \frac{1}{k}$ and each covariance matrix $\Sigma_i \in \mathbb{R}^{n \times n}$ is equal and known. This is the set up in  \cite{chen2020likelihood,daskalakis2017ten,mei2018landscape, xu2016global}.

Recall from \Cref{cor:identifiability_means_unknown} 
that in one dimension, there is generically a unique solution up to symmetry to \eqref{eq:unknown_mean}.
Therefore, Step~\ref{alg2:step1} of \Cref{algorithm:multivariate_density_estimation_mu_unknown_variance_equal} requires \emph{tracking a single homotopy path}. This is in contrast to Step~\ref{alg:step3} of \Cref{algorithm:multivariate_density_estimation} in which one needs to track $(2k-1)!!k!$ homotopy paths to obtain all complex solutions. 
Further, \Cref{algorithm:multivariate_density_estimation_mu_unknown_variance_equal} requires solving a $k \times k$ linear system $n-1$ times (Step~\ref{alg2:step2}). This is again in contrast to \Cref{algorithm:multivariate_density_estimation} where one needs to solve a nonlinear polynomial system that tracks $(2k-1)!!k!$ paths $n-1$ times (Step~\ref{alg:step3}). 
In both cases, we see that we need to solve $n$ polynomial systems, where $n$ is the dimension of the Gaussian mixture model.

If we consider the tracking of a single homotopy path as unit cost, we consider the number of  paths tracked as the complexity of the algorithm (as is customary in numerical algebraic geometry).
With this choice, \Cref{algorithm:multivariate_density_estimation_mu_unknown_variance_equal} tracks $n$ homotopy paths,
while \Cref{algorithm:multivariate_density_estimation}
tracks $(2k-1)!!k!n$ paths and solves a square linear system of size $\frac{kn(n-1)}{2}$.
This highlights how \Cref{algorithm:multivariate_density_estimation_mu_unknown_variance_equal} 
gives an implementation of the method of moments that is effective for large $n$ and $k$.

\begin{algorithm}[]
\DontPrintSemicolon 
\KwIn{
    The set of sample moments: 
\begin{align*}
\overline{\mathfrak{m}}_1 &:= \{ \overline{m}_{e_1},\ldots,\overline{m}_{ke_1}  \} \\
\overline{\mathfrak{m}}_i &:= \{\overline{m}_{e_i}, \overline{m}_{e_1+e_i}, \overline{m}_{2e_1+e_i},\ldots,\overline{m}_{(k-1)e_1+e_i}  \}, \quad 2 \leq i \leq n
\end{align*}
that are the moments to multivariate Gaussian mixture model:
\[ \frac{1}{k} \mathcal{N}(\mu_1, \overline \Sigma) + \cdots + \frac{1}{k} \mathcal{N}(\mu_k, \overline \Sigma). \]

}
\KwOut{
    Parameters $\mu_\ell  \in \mathbb{R}^n$, such that $\overline{\mathfrak{m}}_i$, $i \in [n]$, are the moments of distribution $\sum_{\ell=1}^k \frac{1}{k} \mathcal{N}(\mu_\ell, \overline \Sigma)$ 
}

Using mixing coefficients $\lambda_\ell = \frac{1}{k}$ for $\ell \in [k]$ and sample moments $\overline{\mathfrak{m}}_1$ solve \eqref{eq:unknown_mean} to obtain $\mu_{\ell 1} \in \mathbb{R}$. \label{alg2:step1}  \;\label{step:alg2-univariate}

Using sample moments $\overline{\mathfrak{m}}_i$
 solve the $k \times k$ linear system in $\mu_{i1},\ldots, \mu_{ik}$ for $2 \leq i \leq n$. \label{alg2:step2}\; 

\caption{Density Estimation for Uniform Mixtures of Multivariate Gaussians with Equal Covariances} \label{algorithm:multivariate_density_estimation_mu_unknown_variance_equal}
\end{algorithm}

\subsection{Numerical simulations}\label{ss: numerical simulation}
As mentioned in \Cref{sec:algorithm}, \Cref{algorithm:multivariate_density_estimation} can be adapted to the case when 
the mixing coefficients $\lambda_\ell$, $\ell \in [k]$ are not known a priori. This is outlined in \Cref{alg:unknown-mix}. The main bottleneck with \Cref{alg:unknown-mix} is the first step which requires one to solve the moment equations for a general Gaussian $k$-mixture model. When $k=2,3,4$ the number of solutions to the corresponding moment system are given in \cite{pearson1894contributions,amendola2016moment,amendola2016solving} respectively. Since the number of complex solutions is known in each of these cases, state of the art polynomial system solvers can exploit the label swapping symmetry to find all solutions quickly, and select the best statistically meaningful one. For $k \geq 5$, the number of solutions to the corresponding moment system is unknown and devising a way solve the Gaussian moment system is open.

\begin{table}[htbp!]
\centering
 \begin{tabular}{|c | c c c c |} 
 \hline
 $n$  & 100 &1{,}000 & 10{,}000 & 100{,}000 \\ [1ex] 
 \hline
 Time (s)  & $0.71$ & $6.17$ & $62.05$ & $650.96$ \\[1ex]
 \hline 
 Error  & $4.1 \times 10^{-13}$ & $5.7 \times 10^{-13}$ & $2.9 \times 10^{-11}$ & $1.8 \times 10^{-9}$ \\ [1ex] 
 \hline 
  Normalized Error  & $1.0 \times 10^{-15}$ & $1.4 \times 10^{-16}$ & $7.3 \times 10^{-16}$ & $4.5 \times 10^{-15}$ \\ [1ex]
 \hline
 \end{tabular}
 \vspace{1mm}
 \caption{Average running time and numerical error running \Cref{alg:unknown-mix} on a mixture of $2$ Gaussians in $\mathbb{R}^n$. The error is $\epsilon = \lVert v - \hat v \rVert_2$ where $v \in \mathbb{R}^{4n+2}$ is a vector of the true parameters and $\hat v$ is a vector of the estimates. The normalized error is $\epsilon \slash (4n+2)$.}\label{table:alg1_results_k_2} 
\end{table}

We perform numerical experiments by running \Cref{alg:unknown-mix} on randomly generated Gaussian mixture models with diagonal covariance matrices. We do not assume the mixing coefficients are known a priori, so we solve for them first and then run \Cref{alg:unknown-mix}.
We use \texttt{HomotopyContinuation.jl} to do all of the polynomial system solving \cite{HomotopyContinuationjulia}. The average running time and error for $k = 2$ are given in \Cref{table:alg1_results_k_2} and for $k = 3$ in \Cref{table:alg1_results_k_3}. 
Overall, we see that the error incurred from doing homotopy continuation is negligible.

\begin{table}[htbp!]
\centering
 \begin{tabular}{|c | c c c c |} 
 \hline
 $n$ &  100 &1{,}000 & 10{,}000 & 100{,}000 \\ [1ex] 
 \hline
 Time (s)  & $10.87$ & $73.74$ & $845.55$ & $8291.84$ \\[1ex]
 \hline 
 Error & $4.6 \times 10^{-12}$ & $1.3 \times 10^{-10}$ & $4.6 \times 10^{-10}$ & $9.6 \times 10^{-9}$ \\ [1ex] 
 \hline 
  Normalized Error & $1.5 \times 10^{-14}$ & $4.2 \times 10^{-14}$ & $1.5 \times 10^{-14}$ & $3.2 \times 10^{-14}$ \\ [1ex]
 \hline
 \end{tabular}
 \vspace{1mm}
 \caption{Average running time and numerical error running \Cref{alg:unknown-mix} on a mixture of $3$ Gaussians in $\mathbb{R}^n$. The error is $\epsilon = \lVert v - \hat v \rVert_2$ where $v \in \mathbb{R}^{6n+3}$ is a vector of the true parameters and $\hat v$ is a vector of the estimates. The normalized error is $\epsilon \slash (6n+3)$.}\label{table:alg1_results_k_3}
\end{table}

Next we test the method of moments using finitely many samples to estimate the moments. We note that in this case the method of moments is not guaranteed to return a statistically meaningful solution, i.e. a real valued solution with positive mixing weights and variances. To test how often we return a statistically meaningful solution and how close the returned solution is to the ground truth we consider two situations. First, we calculate the sample moments using all $N$ samples and perform the standard method of moments. In the second experiment, we leave out $25\%$ of the samples and run the method of moments four times, using the remaining $75\%$ of the samples. We then record the best of the four returned solutions. We evaluate both methods on $1000$ trials of a mixture of two randomly generated Gaussians and record the results in \Cref{tab:mom_failrate} and \Cref{tab:mom_failrate2}.

We generate our random Gaussians by selecting the mixing coefficients, $\lambda$, means, $\mu$, and variances, $\sigma^2$, independently from a $\mathcal{N}(0,1)$ distribution. We then take $\lambda \to \frac{|\lambda|}{\lVert \lambda \rVert_1}$ and $\sigma^2 \to | \sigma^2|$ so that the parameters correspond to a true Gaussian mixture model. We see that as the sample size increases, both versions of the method of moments perform better. In particular we see in \Cref{tab:mom_failrate2} that  instituting a naive sampling regime can improve the efficacy of the method of moments. This suggests the need for future work investigating the influence of different moment estimators. 

\begin{table}[h!]
    \centering
    \begin{tabular}{|c|c|c|c|c|c|c|}
    \hline 
       $N$  & Fail percentage & Max  & Min  & Median  & Mean  & Standard deviation  \\ \hline 
      $1{,}000$   & $28.3\%$ & $113.15$ & $0.03$ & $0.60$ & $1.55$ & $5.13$  \\
      $10{,}000$ & $18.3\%$ & $139.72$ & $0.05$ & $0.35$ & $1.15$ & $5.41$ \\
      $100{,}000$ & $12.3\%$ & $20.86$ & $0.02$ & $0.34$ & $0.83$ & $1.36$ \\
      $1{,}000{,}000$ & $7.6\%$ & $21.93 $ & $0.02 $ & $0.32$ & $0.78$ & $1.28$ \\
      \hline
    \end{tabular}
    \caption{Statistics of the error $\epsilon = \lVert v - \hat{v} \rVert_2$ where $v \in \mathbb{R}^6$ is a vector of the true parameters and $\hat{v} \in \mathbb{R}^6$ is a vector of the estimates from a trial of $1{,}000$ randomly generated Gaussian two mixture models using sample moments calculated with all $N$ samples.}
    \label{tab:mom_failrate}
\end{table}

\begin{table}[h!]
    \centering
    \begin{tabular}{|c|c|c|c|c|c|c|}
    \hline 
       $N$  & Fail percentage & Max  & Min  & Median  & Mean  & Standard deviation  \\ \hline 
      $1{,}000$   & $7.4\%$ & $70.29$ & $0.03$ & $0.49$ & $1.12$ & $2.99$  \\
      $10{,}000$ & $6.8\%$ & $34.86$ & $0.04$ & $0.33$ & $0.83$ & $1.90$  \\
      $100{,}000$ & $5.2\%$ & $9.17$ & $0.02$ & $0.33$ & $0.75$ & $1.06$ \\
      $1{,}000{,}000$ & $3.2\%$ & $14.03 $ & $0.02 $ & $0.31$ & $0.76$ & $1.15$ \\
      \hline
    \end{tabular}
    \caption{Statistics of the error $\epsilon = \lVert v - \hat{v} \rVert_2$ where $v \in \mathbb{R}^6$ is a vector of the true parameters and $\hat{v} \in \mathbb{R}^6$ is a vector of the estimates from a trial of $1{,}000$ randomly generated Gaussian two mixture models using sample moments calculated using $0.75 N$ samples then selecting best solution.}
    \label{tab:mom_failrate2}
\end{table}

\subsection{Comparison against EM}
We conclude by comparing the method of moments against the expectation maximization (EM) algorithm. In this section, we consider the mixture of three Gaussians: $\mathcal{M} = 0.2 \cdot \mathcal{N}(-1,1)+ 0.3 \cdot \mathcal{N}(0, 1.5) + 0.5 \cdot \mathcal{N}(1, 1.25)$. We run $100$ trials where in each trial, we randomly sample $500{,}000$ times from $\mathcal{M}$. 

We consider the standard EM algorithm as outlined in \cite{dempster1977maximum}. The EM algorithm is highly dependent on the relationship between the parameters where one initializes $(\hat{\lambda}, \hat{\mu}, \hat{\sigma}^2)$, versus the true parameters $(\lambda, \mu, \sigma^2)$. We consider three initialization schemes: 
\begin{itemize}
    \item[](EM 1): We initialize $(\hat{\lambda}, \hat{\mu}, \hat{\sigma}^2)$ at the ground truth $(\lambda, \mu, \sigma^2)$.
    \item[] (EM 2): We initialize at a noisy ground truth.
    We choose
    $\hat{\mu_i} = \mu_i + \epsilon_{\mu_i}$,
    $\hat{\sigma}_i^2 = |\sigma_i^2 + \epsilon_{\sigma_i^2}|$,
    and 
    $\hat \lambda\in \Delta_2$ to be proportional to the random vector with coordinates $\lambda_i + \epsilon_{\lambda_i}$
    where $\epsilon_{\lambda_i}, \epsilon_{\mu_i}, \epsilon_{\sigma_i^2}$ are sampled independently from a $\mathcal{N}(0,1)$ distribution.
    \item[] (EM 3): We initialize randomly. We chose 
    $\hat{\mu_i} = \epsilon_{\mu_i}$,
    $\hat{\sigma}_i^2 = |\epsilon_{\sigma_i^2}|$,
    and 
    $\hat \lambda\in \Delta_2$ to be proportional to the random vector with coordinates $\lambda_i + \epsilon_{\lambda_i}$
    where $\epsilon_{\lambda_i}, \epsilon_{\mu_i}, \epsilon_{\sigma_i^2}$ are sampled independently from a $\mathcal{N}(0,1)$ distribution. 
\end{itemize}

We compute five sets of sample moments for the method of moments. One set is created by using all $500{,}000$ samples. The other four sets are created using the sampling regime described in \Cref{ss: numerical simulation}. 
From the five computed solutions, we then select the one closest to the ground truth.
Using this sampling regime, the method of moments returned a statistically meaningful solution in $97$ of the $100$ trials, so we compare EM versus the method of moments on those 97 instances.

To compare the method of moments and EM algorithm, we compare the Euclidean distance between the ground truth parameters and estimated parameters produced by the two estimators. 
The method of moments never returned an estimate closer to the ground truth than EM 1.
This is not surprising as EM 1 initializes the EM algorithm at the ground truth. This  initialization is not realistic in practice. 

Next we note that the method of moments returned a closer estimate to the ground truth than
 EM 2 in $42$ of the trials and EM 3 in $45$ of the trials. 
The statistics for each of the methods in the $97$ successful trials is given in \Cref{tab:mom_v_EM}. 
We observe that the method of moments is competitive with EM 2 and EM 3 and the variance in the quality of the estimator produced by the method of moments is smaller than that of EM.

\begin{table}[h!]
    \centering
    \begin{tabular}{|c|c|c|c|c|c|c|}
    \hline 
       Method   & Max  & Min  & Median  & Mean  & Standard deviation & Time (sec)  \\ \hline 
    MOM   & 1.82 & 0.63 & 0.82 & 0.85 & 0.21 & 1.27  \\
    EM 1    & 0.11 & 0.01 & $0.02$ & $0.03$ & $0.02$ & $10.67$ \\
      EM 2  & 2.49 & 0.05 & $0.82$ & $0.82$ & $0.46$ & $67.63$ \\
      EM 3  & 2.19 & 0.07 & $0.89$ & $0.95$ & $0.50$ & $67.00$ \\
      \hline
    \end{tabular}
    \caption{Statistics of the error $\epsilon = \lVert v - \hat{v} \rVert_2$ where $v \in \mathbb{R}^9$ is a vector of the true parameters and $\hat{v} \in \mathbb{R}^9$ is a vector of the estimates from a trial of $500{,}000$ randomly selected samples of a Gaussian 3 mixture model.}
    \label{tab:mom_v_EM}
\end{table}

In addition, we point out that the running time of the method moments is much better than EM when there are many samples. In particular, using our sampling regime, we had to run the method of moments five times. On average, all five runs of the method of moments still took a fraction of the time it took to run EM, even when initializing EM at the ground~truth. 

\subsection{Outlook and future directions}\label{sec:outlook} 
The goal of this paper was to understand the geometry of the moment equations of Gaussian mixture models.
We characterize when these equations have finitely many solutions (over $\mathbb{C}$) and give an upper bound on this number. As a consequence of our proof techniques, we derive a polyhedral homotopy start system to find all solutions to the moment equations. 
This is important because distinct Gaussian $k$-mixture models can have the same first $3k-1$ moments (see \Cref{ex:eight-moments}).
The natural next question is to determine how many moments are needed 
to get a unique solution to the moment equations. 
We give a new answer to this question in terms of rational identifiability. 
Our main result, \Cref{thm:improv1}, says that generic Gaussian $k$-mixture models are identifiable using $3k+2$ moments, improving 
the Kalai-Moitra-Valiant~\cite{kalai2010efficiently} bound of $4k-2$. 
A future direction is to compare the difference between the number of moments needed for rational indentifiability and the classic identifiability for other statistical models. We expect that tools of tensors and algebraic geometry will continue to provide novel insights to rational identifiability. 

The moment equations are guaranteed to have statistically meaningful solutions if the moments are exact. 
Thus, a natural question to ask is if solving the moment equations is ever practical in the finite sample setting. 
Our computational experiments show initial promise and competitiveness against EM in the finite sample setting. This is only a preview of what is to come, and we think better sampling schemes and new understanding of the moment space will see more impactful moment based methods.

\subsection*{Acknowledgments}
The authors would like to thank Fulvio Gesmundo and Nick Vannieuwenhoven for helpful discussions regarding \Cref{thm:improv1}, 
which took place within the Thematic Research Programme ``Tensors in statistics, optimization and machine learning", University of Warsaw, Excellence Initiative – Research University and the Simons Foundation Award No. 663281 granted to the Institute of Mathematics of the Polish Academy of Sciences for the years 2021-2023.

\newpage

\bibliographystyle{siamplain}
\bibliography{references}    
\pagebreak
\section{Supplementary Material}
\subsection{Proofs from \Cref{sec:preliminaries}} Here we provide the proofs for results given in \Cref{sec:preliminaries}.
\begin{proof}[Proof of \Cref{lem:dependent_polytopes}]
 Fix $w \in W$. By the definition of dependent, we need to show that 
 \[ \dim \Big( \sum_{i=1}^n \init_w(Q_i) \Big) <n. \]
 Since $\init_w(Q_i) \subseteq \mathbb{R}^n$, for all $i \in [n]$, $\sum_{i=1}^n \dim (\init_w(Q_i) ) \leq n$. 
Furthermore, since each $Q_i$ is one dimensional, $\sum_{i=1}^n \dim(\init_w(Q_i)) = n$ if and only if $w$ minimizes all of $Q_i$ for all $i \in [n]$. In other words, $\sum_{i=1}^n \dim(\init_w(Q_i)) = n$ if and only if $\init_w(Q_i) = Q_i$ for all $i \in [n]$.

 Recalling $Q_i = \conv( \{0_n, v_i \})$, one sees $\init_w(Q_i) = Q_i$ if and only if 
 \[ 0 = \langle w, 0_n \rangle = \langle w, v_i \rangle  
\text{ for all } i \in [n].
 \]
  Since $\{v_1,\ldots,v_n \}$ are linearly independent, the only $w$ that satisfies this is $w = 0_n \not\in W$.
\end{proof}
\subsection{Proofs from \Cref{sec:one_d_moment_varieties}}
Here we provide the proofs for all results given in \Cref{sec:one_d_moment_varieties}.
\begin{proof}[Proof of \Cref{lem:induction}]
  We verify both by induction.  For both identities, the base case $k = 1$ is immediate. Suppose $\frac{\partial}{\partial \mu}M_{k-1}( \mu, \sigma^2) = (k-1) \cdot M_{k-2}( \mu, \sigma^2)$. By the recursive relationship \eqref{momentdef1} we have
 \begin{align*}
     \frac{\partial}{\partial \mu} M_k &= M_{k-1} + \mu \frac{\partial}{\partial \mu}M_{k-1} + \sigma^2(k-1) \frac{\partial}{\partial \mu} M_{k-2} \\
     &= M_{k-1} + \mu (k-1) M_{k-2} + \sigma^2(k-1)(k-2) M_{k-3} \\
     &= M_{k-1} + (k-1)M_{k-1} = k M_{k-1}.
 \end{align*}

Similarly, suppose that $\frac{\partial }{\partial \sigma^2} M_{k-1}(\mu,\sigma^2) = \binom{k-1}{2} M_{k-3}(\mu,\sigma^2)$. Using \eqref{momentdef1} again,
 \begin{align*}
     \frac{\partial}{\partial \sigma^2}M_k &= \mu \frac{\partial}{ \partial \sigma^2} M_{k-1} + (k-1)M_{k-2} + \sigma^2(k-1) \frac{\partial}{\partial \sigma^2} M_{k-2} \\
     &= \mu  \binom{k-1}{2}M_{k-3} + (k-1)M_{k-2} + \sigma^2 (k-1)\binom{k-2}{2}M_{k-4} \\
     &= (k-1)M_{k-2} + \binom{k-1}{2} \Big( \mu M_{k-3} + \sigma^2 (k-3)M_{k-4} \Big) \\
     &= (k-1)M_{k-2} + \binom{k-1}{2} M_{k-2} = \binom{k}{2}M_{k-2}. 
 \end{align*} 

\end{proof}

\begin{proof}[Proof of \Cref{lem:algebraic_identifiability_lambdas_known}]
By \cite[Ch.~1, Section 5]{hartshorne_AG} it suffices to show that the Jacobian of \eqref{eq:known_lambda} has a nonzero determinant. 
This Jacobian, denoted by $J_k$, 
is a $2k\times 2k$ matrix with rows indexed by equations and columns indexed by the {unknowns} $\mu_1,\sigma_1,\dots, \mu_k,\sigma_k$:
    \begin{align}
    J_k&=\tilde J_k \cdot \tilde D_k \nonumber \\
    &= \begin{bmatrix}
    \frac{\partial M_{1}}{\partial \mu_1}(\sigma_{1}, \mu_1) &
    \frac{\partial M_{1}}{\partial \sigma_1 }(\sigma_{1}, \mu_1) &
    \dots & 
    \frac{\partial M_{1}}{\partial \mu_k}(\sigma_{k}, \mu_k) &
    \frac{\partial M_{1}}{\partial \sigma_k }(\sigma_{k}, \mu_k) &
\\
    \frac{\partial M_{2}}{\partial \mu_1}(\sigma_{1}, \mu_1) &
    \frac{\partial M_{2}}{\partial \sigma_1 }(\sigma_{1}, \mu_1) &
    \dots & 
    \frac{\partial M_{2}}{\partial \mu_k}(\sigma_{k}, \mu_k) &
    \frac{\partial M_{2}}{\partial \sigma_k }(\sigma_{k}, \mu_k) &
\\
    \vdots & \vdots & \vdots & \vdots & \vdots
\\
    \frac{\partial M_{2k-1}}{\partial \mu_1}(\sigma_{1}, \mu_1) &
    \frac{\partial M_{2k-1}}{\partial \sigma_1 }(\sigma_{1}, \mu_1) &
    \dots & 
    \frac{\partial M_{2k-1}}{\partial \mu_k}(\sigma_{k}, \mu_k) &
    \frac{\partial M_{2k-1}}{\partial \sigma_k }(\sigma_{k}, \mu_k) &
\\
    \frac{\partial M_{2k}}{\partial \mu_1}(\sigma_{1}, \mu_1) &
    \frac{\partial M_{2k}}{\partial \sigma_1 }(\sigma_{1}, \mu_1) &
    \dots & 
    \frac{\partial M_{2k}}{\partial \mu_k}(\sigma_{k}, \mu_k) &
    \frac{\partial M_{2k}}{\partial \sigma_k }(\sigma_{k}, \mu_k) &
    \end{bmatrix} \cdot
    \tilde D_k, \label{eq:Jacobian}
    \end{align}
where $\tilde D_k$ is the diagonal matrix given by {$(\overline \lambda_{1},\overline  \lambda_1,\overline \lambda_2,\overline \lambda_2,\dots,\overline \lambda_k, \overline \lambda_{k})$.}
Note that for nonzero $\lambda_\ell$, $\ell \in [k]$, $J_k$ is full rank if and only if $\tilde J_k$ is full rank. 

We now show that $\tilde J_k$ 
has a determinant that is not identically zero
by induction on $k$. 
When $k = 1$, the determinant of 
 $ \tilde J_1 = 
\left[ \begin{smallmatrix} 1 & 0 \\ 
 2 \mu_1 & 1
 \end{smallmatrix} 
 \right]
 $ 
is  $1$.

Next note that $\mu_k, \sigma_k^2$ only appear in the last two columns of $\tilde J_k$. Further, by \Cref{lem:induction}, the nonzero entries of each row of $\tilde J_k$ have higher degree than the previous row. 
Doing Laplace's cofactor expansion along the last two columns of $\tilde J_k$, we get 
\[\det (\tilde J_k) = \det (\tilde J_{k-1}) \cdot  \mu_k^{2k-2}\sigma_k^{2k-2} + \text{lower order terms}. \] 
By induction, $\det(\tilde J_{k-1})$ is 
not identically zero and neither is  $\det (\tilde J_k)$.
\end{proof}

\begin{proof}[Proof of \Cref{thm:degree_lambdas_unknown}]
Consider the moment equations $f_1^k=0,\ldots,f_{2k}^k=0$
as defined in \eqref{eq:known_lambda}, but now  ordering the unknowns as $(\mu_1,\sigma_1^2,\ldots,\mu_k,\sigma_k^2)$.
Denote $P_i = \newt(f_i^k)$. 

Let $Q_\ell \subset P_\ell$ be the line segment defined as:

\begin{equation*}
    \begin{aligned}
    Q_{2\ell-1} &= \conv (\{0_{2k}, (2 \ell -1) \cdot  e_{2\ell-1} \}), \quad \ell \in [k] \\
    Q_{2\ell} &= \conv (\{0_{2k}, \ell \cdot  e_{2\ell } \}), \quad \ell \in [k],
    \end{aligned}
\end{equation*}
where $0_{2k} \in \mathbb{R}^{2k}$ is the vector of all zeros and $e_\ell \in \mathbb{R}^{2k}$ is the $\ell$th standard basis vector. 
By \Cref{ex:line-segment-det}
we have
$$\mvol(Q_1,\ldots, Q_{2k} ) = (1 \cdot 3 \cdot 5 \cdots (2k-1) )\cdot  (1 \cdot 2 \cdot 3 \cdots k ) =(2k-1)!!k!.$$

We want to use the equivalence of \eqref{item:prop1} and \eqref{item:prop2} in \Cref{prop:guarantee_maximality} to show  
\[\mvol(P_1,\ldots,P_{2k}) = (2k-1)!!k!. \] 
\Cref{bkkbound} then gives that the number of
complex solutions to \eqref{eq:known_lambda} is bounded above by $(2k-1)!!k!$.

For a nonzero vector $w\in \mathbb{R}^{2k}$, 
let
\[ \mathcal{I}_w = \{i \in [2k] : Q_i \cap \init_w(P_i) \, \neq \, \emptyset \}.  \]
We will show for
 each $w \in \mathbb{R}^n \backslash \{0\}$, the collection of polytopes 
    \[\{\init_w (Q_i) : i\in \mathcal{I}_w   \}  \] 
    is dependent.

By \Cref{lem:non_empty_I},
(which we postpone to after this proof)
$\mathcal{I}_w$ is nonempty. 
Since each $Q_i$ is a one dimensional line segment, 
it suffices to show that for any nonzero $w \in \mathbb{R}^{2k}$, $w$ minimizes some $Q_i$ at a unique point for $i \in \mathcal{I}_w$. 
This follows from the definition of dependent since each $Q_i$ is one dimensional, so if $\init_w(Q_i)$ is a single point for some $i \in \mathcal{I}_w$, then $\sum_{i \in \mathcal{I}_w} \init_w(Q_i) < | \mathcal{I}_w|$.

We look at two cases. 
\begin{itemize}[leftmargin=*]
\item First, consider when $2\ell \not\in \mathcal{I}_w$ for all $\ell \in [k]$. 
Since the origin is in $Q_i$, we have $0_{2k} \not\in \init_w(P_{2\ell})$ for all $\ell \in [k]$.
Since $P_{2 \ell} \subset \mathbb{R}_{\geq 0}^{2k}$, this means $\val_w(P_{2\ell})<0$ for all $\ell \in [k]$. 
Hence, $w_i<0$ for some $i\in [2k]$.

Let $i$ be the index of the smallest element of $w$. 
If $i$ is odd, then 
\[
P_i = \conv  \big(\{0_{2k},\frac{i-1}{2} e_2, \ldots, \frac{i-1}{2}e_{2k} , \,\,  i e_1, ,\dots, i e_{2k-1} \} \big) 
\]
is in the non-negative orthant. 
So 
$$
\val_w(P_i)= \min \{0, \frac{i-1}{2}w_2, \dots,  \frac{i-1}{2} w_k, \,\,  i w_1, ,\dots iw_{2k-1}
 \}= i w_i.
$$

So $\init_w(Q_i)=\{ i e_{i} \}$ for $i \in I_w$ and we are done.

Now consider when $i$ is even. Recall,
\[P_{2\ell} =   \conv \big( \{0_{2k}, \ell e_2,\ell e_4,\ldots, \, \ell e_{2k}, 2\ell e_1, 2\ell e_3,\dots, 2\ell e_{2k-1} \}  \big)
= \ell \cdot P_2, \]
and so $\init_w(P_{2 \ell}) = \ell \cdot \init_w(P_2)$ for all $\ell \in [k]$. 
Therefore, $2 \ell \not\in \mathcal{I}_w$ for all $\ell \in [k]$ implies that $P_2$ cannot be minimized at $e_j$ for any even $j$. 
This implies that if $i$ is even, $0>w_{i} > 2 w_j$ for some odd $j$ (otherwise $P_{i}$ 
would be minimized at $\frac{i}{2}e_{i})$. Let $j$ be the index of the smallest odd element of $w$. In this case, $P_j$ would be minimized at $j e_j$ so $j \in \mathcal{I}_w$.
Hence $\init_w(Q_j)$ is $\{j e_j\}$.

\item 
Second, suppose $2 \ell \in \mathcal{I}_w$ for some $\ell \in [k]$. If 
$\init_w(P_{2 \ell})\cap Q_{2\ell}$ is a point, then we are done. 
Otherwise, we may assume $P_{2\ell}$ is minimized by $w$ at
a face containing the line segment
$Q_{2\ell} = \conv(\{0_{2k}, \ell e_{2\ell}\})$.

This means 
\[
0=w^T0_{2k}\leq w^Ta \quad \forall a\in P_{2 \ell}. 
\]
So 
$w_i \geq 0$ for all 
$i\in [2k]$ 
because  the vertices of $P_{2 \ell}$ are scaled standard basis vectors. 
With the fact each $P_i$ is in the non-negative orthant, this implies 
\[
0 = \val_w(P_i)\, \text{ for all } i \in [2k],
\]
so $0_{2k} \in \init_w(P_i)$ for all $i \in [2k]$.

This shows that $\mathcal{I}_w = [2k]$ so by \Cref{lem:dependent_polytopes}, we conclude that the collection of polytopes $\{\init_w(Q_1),\ldots, \init_w(Q_{2k}) \}$ is dependent.
\end{itemize}
\end{proof}

\begin{lem}\label{lem:non_empty_I}
The index set
$\mathcal{I}_w$ as defined in the proof of \Cref{thm:degree_lambdas_unknown} is nonempty for any nonzero $w \in \mathbb{R}^{2k}$.
\end{lem}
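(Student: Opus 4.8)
The plan is to reduce membership in $\mathcal{I}_w$ to a comparison among the coordinates of $w$, using the explicit shape of the Newton polytopes $P_i = \newt(f_i^k)$ together with the fact that each $Q_i$ is a segment from the origin to a single vertex of $P_i$. Write $\nu_i$ for the nonzero vertex of $Q_i$, so that $\nu_{2\ell-1} = (2\ell-1)e_{2\ell-1}$ and $\nu_{2\ell} = \ell e_{2\ell}$. The first step is the observation that, since $0_{2k}$ and $\nu_i$ are the two endpoints of the segment $Q_i \subseteq P_i$ and $\init_w(P_i)$ is the face of $P_i$ minimizing $\langle w, \cdot\rangle$, the linear function $\langle w,\cdot\rangle$ restricted to $Q_i$ is minimized at one of its endpoints; hence $Q_i \cap \init_w(P_i) \neq \emptyset$ if and only if $\val_w(P_i) \in \{\, 0,\ \langle w, \nu_i\rangle \,\}$. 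It therefore suffices to produce a single index $i$ for which this holds.

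The second step records the vertex sets of the $P_i$, which follow from expanding $M_i$ by the recursion \eqref{momentdef1}: the monomials of $M_i(\sigma^2,\mu)$ are $\mu^{\,i-2a}(\sigma^2)^a$ with $0 \le a \le \lfloor i/2\rfloor$, so in the variables ordered $(\mu_1,\sigma_1^2,\ldots,\mu_k,\sigma_k^2)$ their exponents, for the pair $(\mu_m,\sigma_m^2)$, lie on the line $x_{2m-1} + 2x_{2m} = i$. Consequently, for odd $j$, $P_j = \conv\big(\{0_{2k}\} \cup \{\, j e_{2m-1} : m \in [k]\,\} \cup \{\, e_{2m-1} + \tfrac{j-1}{2}e_{2m} : m \in [k]\,\}\big)$, while for $j = 2\ell$ even, $P_{2\ell} = \ell\,P_2$ with $P_2 = \conv\big(\{0_{2k}\} \cup \{\, 2 e_{2m-1} : m\in[k]\,\} \cup \{\, e_{2m} : m\in[k]\,\}\big)$. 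In particular every $P_i$ lies in $\mathbb{R}^{2k}_{\geq 0}$ and contains $0_{2k}$.

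The third step is a case split on the sign pattern of $w$. Put $W_{\mathrm{odd}} = \min\{w_\ell : \ell \text{ odd}\}$ and $W_{\mathrm{even}} = \min\{w_\ell : \ell \text{ even}\}$. If $w$ has no negative coordinate, then $\val_w(P_i) = 0$ for every $i$, so $\mathcal{I}_w = [2k]$ and we are done. Otherwise $\min(W_{\mathrm{odd}}, W_{\mathrm{even}}) < 0$, and I would argue as follows. If $2W_{\mathrm{odd}} \le W_{\mathrm{even}}$, one first checks this forces $W_{\mathrm{odd}} < 0$; then, for an odd index $j$ with $w_j = W_{\mathrm{odd}}$, the bounds $w_{2m-1} \ge W_{\mathrm{odd}}$ and $w_{2m} \ge W_{\mathrm{even}} \ge 2W_{\mathrm{odd}}$ give $\langle w, v\rangle \ge jW_{\mathrm{odd}} = \langle w, \nu_j\rangle$ for every vertex $v$ of $P_j$ (the origin included, since $jW_{\mathrm{odd}} < 0$), so $\val_w(P_j) = \langle w,\nu_j\rangle$ and $j \in \mathcal{I}_w$. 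If instead $2W_{\mathrm{odd}} > W_{\mathrm{even}}$, one checks this forces $W_{\mathrm{even}} < 0$; then, for an even index $j = 2\ell$ with $w_j = W_{\mathrm{even}}$, $\val_w(P_{2\ell}) = \ell\,\val_w(P_2) = \ell\min(0,\,2W_{\mathrm{odd}},\,W_{\mathrm{even}}) = \ell\,W_{\mathrm{even}} = \langle w,\nu_{2\ell}\rangle$, so $2\ell \in \mathcal{I}_w$. In either case $\mathcal{I}_w$ is nonempty.

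I do not expect a serious obstacle here; the only point that needs thought is identifying the correct threshold between the two subcases. It is the comparison of $2W_{\mathrm{odd}}$ with $W_{\mathrm{even}}$ precisely because the odd-index polytopes $P_j$ carry the ``mixed'' vertices $e_{2m-1} + \tfrac{j-1}{2}e_{2m}$, whose $w$-values are exactly what can prevent $\nu_{2\ell-1}$ from lying in $\init_w(P_{2\ell-1})$; once the polytopes are written out explicitly, each subcase is a one-line inequality check, and the degenerate possibility that the minimum of $\langle w,\cdot\rangle$ on some $P_i$ is attained only at the origin is absorbed into the first case above.
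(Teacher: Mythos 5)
Your proof is correct and takes essentially the same route as the paper's: both arguments reduce the claim to exhibiting one index whose segment endpoint (or the origin) realizes $\val_w(P_i)$, and your case split on $2W_{\mathrm{odd}}$ versus $W_{\mathrm{even}}$ is precisely the paper's case analysis on which vertex of $P_2$ is exposed by $w$. The only difference is cosmetic but in your favor: your vertex description of the odd-order Newton polytopes, with mixed vertices $e_{2m-1}+\tfrac{j-1}{2}e_{2m}$ coming from the monomials $\mu^{j-2a}(\sigma^2)^a$, is the accurate one (the paper's proof records these as $\tfrac{j-1}{2}e_{2m}$), though the required inequalities close either way.
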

\begin{proof}[Proof of \Cref{lem:non_empty_I}]

Recall that 
\[ P_2 = \conv \big( \{0_{2k},e_2,\ldots,  e_{2k},  \, 2e_1,\ldots, 2e_{2k-1} \} \big). \]
We consider three cases, depending on $\init_w(P_2)$.
\begin{itemize}[leftmargin=*]
    \item If either $0_{2k}$ or $e_2$ is in $\init_w(P_2)$, then $Q_2 \cap \init_w(P_2) \,\neq\,  \emptyset$ and hence $2 \in \mathcal{I}_w$.
    
    \item Now suppose $e_j \in \init_w(P_2)$ for some even $j>2$. This means $2w_j \leq w_i$ for all odd $i$ and $w_j \leq w_m$ for all even $m$. Consider
    \[ P_j = \conv \big( \{0_{2k},j e_1, \frac{j}{2} e_2,\ldots, j e_{2k-1}, \frac{j}{2} e_{2k} \} \big). \]
    Then $\frac{j}{2} e_{j} \in \init_w(P_j)$.
    Since $Q_j =\conv(\{0_{2k} ,\frac{j}{2}e_j \})$, 
    we have 
    $j \in \mathcal{I}_w$.
    
    \item 
  On the other hand, now suppose $e_i \in \init_w(P_2)$ for some odd $i \geq 1$. This means $w_i \leq w_j$ for all odd $j$ and $2w_i \leq  w_m$ for all even $m$. Consider 
    \[ P_i = \conv \big( \{ 0_{2k}, \frac{i-1}{2}e_2,\dots, \frac{i-1}{2}e_{2k}, \, i e_1,\ldots,i e_{2k-1} \} \big). \]
    Then $i e_i \in \init_w(P_i)$. Since $Q_i = \conv \big( \{0_{2k}, i e_i \} \big)$, we have $i \in \mathcal{I}_w$.
\end{itemize}
\end{proof}

\begin{proof}[Proof of \Cref{prop:lambda_known_sigma_equal_alg_ident}]
This argument is similar to the one given in \Cref{lem:algebraic_identifiability_lambdas_known}. Again, we consider the Jacobian of \eqref{eq:known_lambda_homoscedastic} with rows indexed by equations and columns indexed by the variables $\sigma^2,\mu_1,\ldots,\mu_k$. It suffices to show that for the Jacobian, $J_k$ is not identically zero. 
We proceed by induction on $k$. When $k = 1$,
 \[ J_1 =  \begin{bmatrix}
 0 & \overline \lambda_1 \\ \overline \lambda_1 & 2 \overline \lambda_1 \mu_1
 \end{bmatrix}  \]
has determinant $- \overline \lambda_1^2$ and is not identically zero because we assume $\overline \lambda_i\neq 0$.
Now consider 
 the $(k+1)\times (k+1)$ matrix 
 $J_k$ for any $k$. By cofactor expansion along the last column,  
 \[\det(J_k) = \mu_k^k \cdot \det( J_{k-1}) +   \text{lower order terms in } \mu_k. \]
 By induction, $\det(J_{k-1})$ is not identically zero. 
 This shows that $\det(J_k)$ is a nonzero univariate polynomial in $\mu_k$ with nonzero leading coefficient $\det(J_{k-1})$. 

\end{proof}

\begin{proof}[Proof of \Cref{thm:means_unknown_sigma_equal}]
For $i \in [k+1]$, let $P_i = \newt(f_i^k) $ where $f_i^k$ is as defined in 
\eqref{eq:known_lambda_homoscedastic}
with variable ordering $(\mu_1,\ldots, \mu_k, \sigma^2)$.

Define $Q_i \subset P_i$ as follows:
\begin{equation}
    \begin{aligned}
    Q_1 &= \conv(\{0_{k+1}, e_1 \} )\\
    Q_2 &= \conv ( \{0_{k+1}, e_{k+1} \} )\\
    Q_{i} &= \conv( \{0_{k+1}, i \cdot e_{i - 1} \}), \qquad 3 \leq i \leq k+1. 
    \end{aligned}
\end{equation}
where $0_{k+1} \in \mathbb{R}^{k+1}$ is the vector of all zeros and $e_i \in \mathbb{R}^{k+1}$ is the $i$th standard basis vector.
By \Cref{ex:line-segment-det}, 
\[ \mvol(Q_1,\ldots,Q_{k+1}) = 1 \cdot 3 \cdot 4 \dots \cdot (k+1) = \frac{(k+1)!}{2}. \]

As in the proof of \Cref{thm:degree_lambdas_unknown}, we want to show the equality $\mvol(P_1,\ldots,P_{k+1}) = \mvol(Q_1,\ldots,Q_{k+1})$ by using the equivalence of \eqref{item:prop1} and \eqref{item:prop2} in \Cref{prop:guarantee_maximality}. 

Let $\mathcal{I}_w$ be the set of indices such that $Q_i$ has a vertex in  $\init_w(P_i)$. Specifically, 
\[ \mathcal{I}_w = \{i \in [k+1] : Q_i \cap \init_w(P_i) \neq \emptyset \}.  \]

By \Cref{lem:non_empty_I_2}, $\mathcal{I}_w$ is nonempty. Now we want to show that for any $w \in \mathbb{Z}^{k+1} \backslash \{0\}$ the nonempty set of polytopes
\[ \{\init_w(Q_i) : i \in \mathcal{I}_w \} \]
is dependent. Since $Q_i$ is a one dimensional line segment, it suffices to show that for any $w$, there exists an $i \in \mathcal{I}_w$ such that $Q_i$ is minimized at a single vertex. We consider two cases.

\begin{itemize}[leftmargin=*]
\item Suppose $2 \in \mathcal{I}_w$. If $\init_w(P_2)$ is a single point, then we are done. Otherwise, assume $P_2$ is minimized at $Q_2 = \conv(\{0_{k+1}, e_{k+1} \})$. This means $w_{k+1} = 0$ and $w_j \geq 0$ for all $j \in [k]$, giving $\val_w(P_i) = 0$ so $0 \in \init_w(P_i)$ for all $i \in [k]$. This shows that $\mathcal{I}_w = [k+1]$. By \Cref{lem:dependent_polytopes}, the collection of polytopes $\{\init_w(Q_1),\ldots, \init_w(Q_{k+1}) \}$ is dependent.
\item Now suppose $2 \not\in \mathcal{I}_w$. If $1 \in \mathcal{I}_w$ and $Q_1$ is minimized at a single vertex then $\{\init_w(Q_i) : i \in \mathcal{I}_w \}$ is dependent, so we are done. If $Q_1$ is not minimized at a single vertex, then $w_1 = 0$ and $w_j \geq 0$ for all $2 \leq j \leq k$. Since $2 \not\in \mathcal{I}_w$, this gives that $w_{k+1} > 2 w_j$ for all $j \in [k]$. Therefore, $\val_w(P_j) = 0$ for all $j \in [k+1]$ which shows that $0 \in \init_w(P_2)$. This contradicts $2 \not\in \mathcal{I}_w$. On the other hand, if $i \in \mathcal{I}_w$ where $i \geq 3$, either $Q_i$ is minimized at a single vertex or $w_{i-1} = 0$ and $w_j \geq 0$ for all $j \in [k+1] \backslash (i-1)$. In the latter case, this shows $\val_w(P_i) = 0$ for all $i \in [k+1]$, contradicting $2 \not\in \mathcal{I}_w$.
\end{itemize}
\end{proof}

\begin{lem}\label{lem:non_empty_I_2}
The index set
$\mathcal{I}_w$ as defined in the proof of \Cref{thm:degree_lambdas_unknown} is nonempty for any nonzero $w \in \mathbb{R}^{2k}$.
\end{lem}
\begin{proof}[Proof of \Cref{lem:non_empty_I_2}] 
If $w$ is in the non-negative orthant, then $w$ minimizes $P_i$ at the origin
for all $i \in [k+1]$. In this case $\mathcal{I}_w = [k+1] \neq \emptyset$.
Now let $i$ be the index of the smallest element of $w$. 
If $i = k+1$ then $w$ minimizes $P_2$ and $Q_2$ at $\{ e_{k+1} \}$. If $i <k+1$ then $w$ minimizes $P_{i+1}$ and $Q_{i+1}$ at $\{(i+1) e_i \}$, which shows $\mathcal{I}_w $ is nonempty. 
\end{proof}

\begin{proof}[Proof of \Cref{lemma:Binomial-BBK-homotopy}]
    By \Cref{bkkbound} the number of solutions for $G(x)=0$ equals the generic number of solutions for a polynomial system with Newton polytopes $\newt(F)$.
    Since $\newt(G)\subseteq\newt(F)$ and $\gamma$ is generic,
    we have
    $\newt(F)=\newt(\gamma(1-t)G+tF)$ for $t\in (0,1]$.
    So the mixed volume, and therefore the number of solutions, of  $(1-t)G+tF$ agrees with the mixed volume of $\newt(F)$.
\end{proof}

\begin{proof}[Proof of \Cref{thm:degree_mus_unknown}]
First we observe that \eqref{eq:unknown_mean} 
for generic moments (as defined by \Cref{eq:generic-moments})
 has finitely many solutions by the same arguments in the proof of \Cref{lem:algebraic_identifiability_lambdas_known} and \Cref{prop:lambda_known_sigma_equal_alg_ident}. This proves the first part of the theorem. 

It follows from the B\'ezout bound that
there are at most $k!$ complex solutions to \eqref{eq:unknown_mean}. We now show that this bound is 
achieved with equality 
for generic moments 
by giving parameter values where there are precisely $k!$ solutions. 

Consider $\overline \lambda_\ell = \frac{1}{k}, \sigma_\ell^2 = 1$ and $\overline{m}_\ell = \sum_{\ell=1}^k \frac{1}{k} M_\ell(1, \ell)$ for $\ell \in [k]$.  It is clear that this has a solution of the form $\boldsymbol{\mu} = (1,2,\ldots,k)$. 
Further, by the same induction argument involving the Jacobian of \eqref{eq:unknown_mean} referenced above, there are finitely many solutions for this set of parameters.
We observe that in this case our solution set has the typical label-swapping symmetry. 
This shows that any action by the symmetric group on $k$ letters, $S_k$, on any solution is also a solution. 
Thus, there are $k!$ solutions to \eqref{eq:unknown_mean} in this case, namely 
{$\{\rho \cdot (1,2,\ldots,k) : \rho \in S_k \}$.}
\end{proof}

\subsection{Proofs from \Cref{sec:highdimstatistics}} Here we give the proofs for results from \Cref{sec:highdimstatistics}.

\begin{proof}[Proof of \Cref{thm:proof_of_algorithm}]

From \Cref{thm:degree_lambdas_unknown}, in each iteration of the for-loop in \Cref{algorithm:multivariate_density_estimation}, we find finitely many prospective choices for each entry of the means $\mu_\ell$ and diagonal entries of $\Sigma_\ell$, $\ell \in [k]$. There is a unique one that agrees with all moments, due to the identifiability result in
\Cref{thm:improv1}.
In other words, after the for-loop
we uniquely recover all mean vectors $\mu_\ell$ and the diagonal entries of $\Sigma_{\ell}$, $\ell \in [k]$. It suffices to show that the linear system in Step~\ref{alg:linear} of \Cref{algorithm:multivariate_density_estimation} has a unique solution. 

Fix $i,j \in [n]$, $i \neq j$, we show that using moment equations $\overline{\mathfrak{m}}_2 = \{ m_{t e_i + e_j} \}$ for $t \in [k]$, the corresponding linear system in the variables $\sigma_{\ell ij}$ for $\ell \in [k]$ generically has a unique solution. 

By \cite[Equation 13]{WILLINK2005271} and the linearity of expectation, using the notation in \eqref{eq:m-to-M} we have
\[
m_{te_i+e_j} = \sum_{\ell = 1}^k
\mu_{\ell j} \cdot M_{te_i}(\mu_\ell,\Sigma_\ell)
+ t \sigma_{\ell ij}\cdot  
M_{(t-1)e_i} (\mu_\ell,\Sigma_\ell) 
\]

After substituting the values we have already obtained into the polynomials $m_{te_i+e_j}$, we get a linear system with a $k \times k$ coefficient matrix that we denote by $A$. 
Indexing the columns of $A$ by $\sigma_{1ij},\ldots,\sigma_{kij}$,
 the $pq$-th entry of $A$ is $p \cdot M_{(p-1)e_i}(\mu_q,\Sigma_q)$.
Observe that the entries of $A$ are polynomials in the parameters $\lambda_\ell, \mu_{\ell i}, \mu_{\ell j}$ for $\ell \in [k]$ and $i, j \in [n]$. Moreover, the entries in row $p$ of $A$ are polynomials of degree $p-1$ and the entries in column $q$ of $A$ contains the subset of parameters corresponding to the $q$th mixture component, $\lambda_q, \mu_{qi}, \mu_{qj}$, $i \in [n]$. Therefore, for generic values of $\mu_\ell, \lambda_\ell, \Sigma_\ell$, these rows will not be linearly dependent 
so the corresponding linear system is full rank, meaning there is a unique solution which recovers the off diagonal entries of $\sigma_{\ell ij}$, $\ell \in [k]$.
\end{proof}

\end{document}